\tikzstyle{startstop} = [rectangle, rounded corners, minimum width=3cm, minimum height=0.8cm,text centered, draw=black]
\tikzstyle{process} = [rectangle, rounded corners,minimum width=3cm, minimum height=.8cm,text centered, draw=black]
\tikzstyle{decision} = [diamond, minimum width=2cm, minimum height=1cm, text centered, draw=black]
\tikzstyle{arrow} = [->,>=stealth]
\tikzstyle{blank} = [node distance=1cm]
\tikzstyle{block} = [rectangle, draw, fill=blue!20, text centered, rounded corners, minimum height=1em]
\tikzstyle{line} = [draw, -latex']
\tikzstyle{cloud} = [draw, ellipse,fill=red!20, node distance=4cm,minimum height=1em]
\title{\bf Reactive Supervisory Control of Open Discrete Event Systems}
\author{Alireza~Partovi,~\IEEEmembership{Student~Member,~IEEE,}
        and Hai~Lin,~\IEEEmembership{Senior~Member,~IEEE}
	\thanks{This work was supported in part by the National Science Foundation under Grant ECCS-1253488, Grant IIS-1724070, and Grant CNS-1830335, and in part by the Army Research Laboratory under Grant W911NF-17-1-0072.}
	\thanks{The authors are with the Department of Electrical Engineering, University of Notre Dame, Notre Dame,
		IN, 46556 USA. {\tt\small apartovi@nd.edu, hlin1@nd.edu}  \vspace{1mm}}
		}% <-this % stops a space
\newtheorem{theorem}{Theorem}
\newtheorem{lemma}{Lemma}
\newtheorem{proposition}{Proposition}
\newtheorem{definition}{Definition}
\newtheorem{remark}{Remark}
\newtheorem{example}{Example}
\newtheorem{problem}{Problem}
\begin{document}
\maketitle
\thispagestyle{empty}
\pagestyle{empty}

\begin{abstract}
The conventional Wonham-Ramadge supervisory control framework of discrete event systems  enforces a closed discrete event system to generate correct behaviors under certain environments, which can be captured by an appropriate plant model. Nevertheless, such control methods cannot be directly applied for many practical engineering systems nowadays since they are open systems and their operation heavily depends on nontrivial interactions between the systems and the external environments. These open systems should be controlled in such a way that accomplishment of the control objective  can be guaranteed for any possible environment, which may be dynamic, uncertain and sometimes unpredictable. In this paper, we aim at extending the conventional supervisory control theory to open discrete event systems in a reactive manner. Starting from a novel input-output automaton model of an open system, we consider control objectives that characterize the desired input-output behaviors of the system, based on which a game-theoretic approach is carried out to compute a reactive supervisor that steers the system to fulfill the specifications regardless of the  environment behaviors. We present a necessary and sufficient conditions for the existence of such a reactive supervisor. Furthermore, illustrative examples are given throughout this paper to demonstrate the key definitions and the effectiveness of the proposed reactive supervisor synthesis framework.

%The proposed reactive supervisory control scheme is illustrated on data-access system with  non-interference security specification. The unsupervised system is shown to be insecure, in a sense that an intruder, as the  environment, can deduce confidential information by monitoring the system input-output behavior. A reactive supervisory control is then designed to enforce the system to behave securely by respecting a non-interference  specification.
\end{abstract}
% \begin{IEEEkeywords}
% Discrete event systems, open systems, supervisory control theory, reactive controller synthesis, game theory.
% \end{IEEEkeywords}

\section{Introduction}
%\noindent{\it Background:} 
%\IEEEPARstart{D}{iscrete} 
Discrete event systems (DESs) refer to as a class of dynamical systems that possess a discrete state space and evolve in response to abrupt occurrences of certain qualitative changes, called {\it events}.  Due to the fact that operation of many engineering systems nowadays, ranging from intelligent manufacturing systems to transportation networks, is governed by the sequential executions of certain control actions and hence shows great event-driven features, DESs have become  useful in practice in recent years \cite{wainer2016discrete}.

A fundamental research question in DESs is to  design  a feedback controller, called a {\it supervisor}, to drive a DES to achieve certain desired formal properties. The theory of supervisory control of DESs was first introduced by  Ramadge and Wonham  \cite{ramadge1989control}, where the DES was modeled by a finite automaton, and the specifications were expressed in regular languages. Later on, the supervisory control theory for DES was further extended to variety of formal specifications, such as $\omega$-regular languages \cite{kumar1992supervisory} and temporal logic formulas \cite{jiang2006supervisory}.
%, and  different control-theoretic architectures, such as partial-observation control \cite{cieslak1988supervisory}, decentralized  control \cite{komenda2015coordination}.

% \cite{ramadge1989some,kumar1992supervisory} and temporal logic formulas \cite{lin1994reachability,jiang2006supervisory}, and  different control-theoretic architectures, such as partial-observation control \cite{cieslak1988supervisory}, decentralized  control \cite{yoo2002general,komenda2015coordination}.
%{\color{red} Others .... timed specifications, distributed, decentralized, diagnosis ... }
We argue that the majority of existing results on DES supervisor synthesis are  only suitable for  {\it closed systems}, since the interactions between the plant and environments are assumed to be known and can be fully captured by the  uncontrolled DES plant. 
Nevertheless, this assumption may not be appropriate  for many modern engineering systems, for instance, web-service security systems \cite{hashemian2005graph},  robotic manipulation systems \cite{partovi2018reactive_robotic}, since they are {\it open systems}, that are directly exposed to  dynamic and uncertain environments.  
Furthermore, the control specification for open systems, namely {\it reactive specifications}, often are required to be guaranteed with respect to all possible environment behaviors, which  goes beyond the traditional supervisory control theory of closed DESs \cite{ramadge1989control}. %\cite{wonham2015supervisory}.

%Since performance of practical engineering systems shall be influenced by not only the internal states, but also the external working environment; therefore, we need frameworks beyond closed systems to characterize such interactions between systems and their working environments.

To bridge the gap between conventional DES supervisory control theory and open systems, recent years have witnessed efforts devoted to the supervisory control of input/output DES. 
Non-blocking output supervisors were synthesized for DES with outputs in \cite{mahdavinezhad2008supervisory}. Authors  in \cite{ushio2016nonblocking} and \cite{yin2017supervisor} studied the supervisor synthesis technique for Mealy automata with non-deterministic output functions.
Authors in \cite{nke2015control} proposed a new interpretation of the I/O transitions, based on which a controller that enforced determinism and non-blockingness of the closed-loop system was designed.
The control objectives in the aforementioned papers  were  to restrict the controllable input events such that  the system output behavior meets a specification. The supervised plant is therefore, a closed-system and can not accept reactive specifications expressing the  interactions between  the plant and environment.

On the other hand, in the computer sciences literature, {\it reactive synthesis} approaches have been pursued to construct an open system, namely {\it reactive module}, to  satisfy a given specification, regardless how the environment behaves, \cite{pnueli1989synthesis,church1957applications, finkbeiner2016synthesis}. 
The majority of these works, however, either did not consider an internal dynamic model for the reactive module, or assumed that its  behavior can be encoded in the specification \cite{jiang2006supervisory}, which can dramatically  increase  the computational complexity of the synthesis procedure.
Exceptions  are  \cite{kupferman2000open}, where it was  assumed that all the plant output events are controllable and hence, no controllability constraints imposed by the plant were taken into consideration.

A connection between reactive synthesis and supervisory control of DES  was conducted in \cite{EHLERS2014222,ehlers2017supervisory}, which mainly addressed the conditions under which one design framework can be converted to the other.  
In this work, we have a different goal that is to study supervisory control for a class of open DESs with reactive specifications. 
%We aim to develop  a correct control policy that renders the system behaviours to satisfy the specifications regardless of  behaviors of the environment.

Towards this aim, we  first propose an {\it open DES} model whose behaviors  essentially  depends on the internal model of the system and its interactions with an external dynamic environment. Upon this model, we consider regular language specifications defined over the system's input-output behaviors, and then develop  a game-theoretic design approach for the supervisor such that the controlled system achieves the  the specifications regardless of  behaviors of the environment. Our basic idea is to construct a  two-player game among  the   environment, and a supervisor  representing all the controllable system outputs. It turns out that synthesizing the appropriate reactive supervisor can be reduced to finding a winning strategy for the supervisor player.

The rest of this paper is organized as follows. In
Section~\ref{sec:Open_DEC}, we introduce the open DES model and define its recognized languages. A necessary and sufficient condition for the existence of a reactive supervisor is presented in Section~\ref{sec:reactive_sup_design}. An illustrative example is utilized throughout the paper to explain the key definitions and the effectiveness of the proposed game-theoretic supervisor design method.
%In this paper, we present the key ideas and an  overview of the correctness proof, and we leave the details to our technical report \cite{partovi2018reactive} due to space limitations.
%The proposed reactive supervisory control scheme is illustrated on a data access system  with non-interference  specification in Section \ref{sec:non-interference_example}. 

\section{Open Discrete-event Systems} \label{sec:Open_DEC}

To formally define open DESs, we first review the following  notations. For a given finite set (alphabet) of {\it events} $\Sigma$, a finite {\it word} $w=\sigma_0\sigma_1\ldots \sigma_n$, $n \ge 1$, is a finite sequence of elements in $\Sigma$,  for all $\sigma_i \in \Sigma$, and $0 \le i \le n$. We denote  the length of $w$ by $\vert w \vert$. { Let $w$, and $u$ be finite words, $w \cdot u$ is their \textit{concatenations}.}
%If it is clear, we $wu$ instead 
The notation $2^\Sigma$ refers to power set of $\Sigma$, that is, the set of all subsets of $\Sigma$.
A set deference is  $\Sigma-A=\{x \mid x \in \Sigma, x \not \in A\}$.
We  denote  $\Sigma^*$  the set of all finite words including the empty word $\epsilon$. A subset of $\Sigma^*$ is called a {\it language} over $\Sigma$. The {\it prefix-closure} of a language $\mathcal{L} \subseteq \Sigma^*$, denoted as $\overline{\mathcal{L}}$, is the set of all {\it prefixes} of words in $\mathcal{L}$, i.e., $\overline{\mathcal{L}}=\{s\in\Sigma^*|(\exists t\in\Sigma^*)[st\in \mathcal{L}]\}$. $\mathcal{L}$ is said to be {\it prefix-closed} if $\overline{\mathcal{L}}=\mathcal{L}$. We denote the set of all non-negative integer as $\mathbb{N}$.
%The subset $\mathcal{L} \subseteq \Sigma^*$   defines $*-$ over alphabet set $\Sigma$. 

\begin{definition}{ 
The uncontrolled open DES plant is defined by $\mathcal{P}=(Q_p,\Sigma_p,\Sigma_x,\Sigma_y, q_{p0},\delta_p,\gamma_p,Q_{pm})$, where 
$Q_p$ is the finite set of plant states,  $\Sigma_p$ is the finite set of internal events which is  partitioned into  disjoint sets of  controllable events $ \Sigma_c$, and uncontrollable events $\Sigma_{uc}$. The finite sets $\Sigma_x=\{x_1,x_2,\dots,x_n\}$ and $\Sigma_y=\{y_1,y_2,\dots,y_m\}$, respectively representing   the external environment or the input, and the plant output event sets. $q_{p0}\in Q_p$ denotes the initial state.  $\delta_p: Q_p\times \Sigma_x \times \Sigma_p \to Q_p$ is the  transition function of the plant.  $\gamma_p: Q_p \times \Sigma_p  \to \Sigma_y$ is the output function  which assigns deterministically an output symbol to pair of the state and  events, and  $Q_{pm} \subseteq Q_p$ is the set of mark states.
% \begin{itemize}
% \item $Q_p$ is the set of plant states. 
% \item $\Sigma_p$ is the finite set of internal events which is  partitioned into  disjoint sets of  controllable events $ \Sigma_c$, and uncontrollable events $\Sigma_{uc}$.
% \item $\Sigma_x$ is the external environment or input set.
% \item $\Sigma_y$ is the plant output set.
% \item  $q_{p0}\in Q_p$ is the initial state.
% \item $\delta_p: Q_p\times \Sigma_p \times \Sigma_x \to Q_p$ is the partial transition function of the plant.

% \item  $\gamma_p: Q_p \times \Sigma_p  \to \Sigma_y$ is the output function  which assigns deterministically an output symbol to pair of the state and  events.
% \item $Q_{pm} \subseteq Q_p$ is the set of mark states.
% \end{itemize}

% {\color{red}
% To be added:
% \begin{itemize}
%     \item Open Discrete system is a input-enabled DES, meaning $\delta_p(q,\epsilon,x)!$ for all $x \in \Sigma_x$, and $q \in Q_p$.
%     \item check carefully open DES def and converting to  normal DES.
% \end{itemize}
% }

The transition function $\delta_p$, and $\gamma_p$ are respectively extended to a partial function on  $\delta_p: Q_p \times (\Sigma_x \times \Sigma_p)^*\to Q_p$, and $\gamma_p: Q_p\times \Sigma^*_p \to \Sigma_y$  in a standard way. The notation  $\delta_p(q,x,\sigma)!$ means the function $\delta_p$ is defined, and is non-empty at state $q \in Q_p$, for $x \in \Sigma_x$ and $\sigma \in \Sigma_p$. 

Without lost of generality, we assume the events sets are  disjoints,  and all the state $q \in Q$ are reachable from initial state $q_{p0}$.}
%We assume all  set of inputs is  possible in each states, i.e., $\gamma(q)=\Sigma_x$ for all $q\in Q_p$. 
% Ali: we can not handle the multi output in supervisor design but need it for LTL specification, solve it!!!!
%We further assume $\Sigma_x$ events are not capable of triggering a state-transition in the plant, although if there exist such event, it  can  be added to $\Sigma_{uc}$. 
%and $\delta_p(q_p,\omega_{xp})!$ means $\delta_p(q_p,\omega_{xp})$ is defined for $q_p \in Q_p$, and $\omega_{xp} \in (\Sigma_p \times \Sigma_x)^*$ %\cite[p.~17]{Hopcroft:2007:IAT:1454320}.
Furthermore, an open DES is required to  process every external input event. Therefore, all states $q_p \in Q_p$ have to be  input-enabled, that is $\delta_p(q_p,x,\sigma)!$, for some $\sigma \in \Sigma_p$, and  all $x \in \Sigma_x$. %Note that, this requirement is not restrictive, since for any state, and input event, we can always assume $\delta_p(q_p,x,\epsilon)
%!$, and $\gamma_p(q_p,\epsilon)=\epsilon$.
%We call $\mathcal{P}$ a closed system if $X=\emptyset$, and otherwise it is called an open DES. %The size of an open DES is $\mathcal{P}$ is defined by $|\mathcal{P}|=|Q_p|+|\delta_p|$.
\end{definition}

\begin{remark}
The proposed  open DES formalism is syntactically  similar to  an I/O automaton  of \cite{lynch1988introduction} and the interface automaton introduced in \cite{de2001interface}. In a similar fashion,  every state in an open DES is allowed to be receptive toward all possible  external input events, however, here  the output behavior is defined to be a mapping function from the internal system behavior which itself is influenced by the  environment. This condition is important to be considered in reactive synthesis of open systems since the environment may restrict the plant output behavior. Furthermore, the trace of input and output action on the I/O  and  interface automata are not necessarily prefix-closed \cite{lynch1988introduction}, which is a required property  in reactive synthesis \cite{schmuck2018relation}. We will later show that input-output event trace of an open DES is prefix-closed.
\end{remark}

\begin{remark}
In our setup,  we  consider deterministic open DES that is required to have a deterministic transition and output functions. The legal behavior of this system can be adequately expressed in terms of a finite regular  language. We believe similar to conventional non-deterministic DES \cite{hopcroft2008introduction}, non-deterministic open DESs  can  be transformed into a deterministic one that accepts the same language. 
%The legal behavior of this system can be adequately expressed in terms of a language specification.
%it is known that conventional  DES with non-deterministic transition function  can always be transformed into a deterministic one with the same language \cite{hopcroft2008introduction}. 
\end{remark}
%We define a history a finite sequence $h=(q_0,x_0,\sigma_0),  \dots ,(q_n,x_n,\sigma_{n})$, where $q_i\in Q_p$, $\sigma_{i} \in \Sigma_p$,  $x_{i} \in \Sigma_x$, and it is accepted by  $\mathcal{P}$ if $q_0 \in q_{p0}$ and  $\delta_p(q_i,x_i,\sigma_i)!$  for all $ i \in \{0,\dots,n \}$.
%Let's denote $\mathcal{H}(\mathcal{P})$  a set of all histories accepted by $\mathcal{P}$. 

Let's denote the set of all the events in the plant as $\Sigma_e=\Sigma_x \times \Sigma_p \times \Sigma_y$.
 %We define the  projection functions as $\mathrm{P}_{\alpha}=\Sigma^*_e \to \Sigma^*_\alpha$, where the index $\alpha$ belongs to a set $\{p,x,y\}$, respectively representing projection to internal event set $\Sigma_p$,  input event set $\Sigma_x$, and  output event set $\Sigma_y$.
 Projection function to input-output event, internal, and  input event sets  are respectively denote as 
$\mathrm{P}_{xy}=\Sigma^*_e \to (\Sigma_x \times \Sigma_y)^*$, $\mathrm{P}_{xp}=\Sigma^*_e \to (\Sigma_x \times \Sigma_p)^*$.
They  inductively are defined as followings and can  be extended to the sets. %$\mathrm{P}_p((\epsilon,\epsilon,\epsilon))=\epsilon$,
%      $\mathrm{P}_{xp}((\epsilon,\epsilon,\epsilon))= \mathrm{P}_{xy}((\epsilon,\epsilon,\epsilon))=(\epsilon,\epsilon)$, 
%      $\forall w \in \Sigma^*_e$, and $(x,\sigma,y)\in \Sigma_e$:
%     $\mathrm{P}_x(w(x,\sigma,y))=\mathrm{P}_x(w).x$,
%     $\mathrm{P}_p(w(x,\sigma,y))=\mathrm{P}_p(w).\sigma$,
%     $\mathrm{P}_y(w(x,\sigma,y))=\mathrm{P}_y(w).y$,
%     $\mathrm{P}_{xp}(w(x,\sigma,y))=\mathrm{P}_{xp}(w).(x,\sigma)$,
%     $\mathrm{P}_{xy}(w(x,\sigma,y))=\mathrm{P}_{xy}(w).(x,y)$.
\begin{itemize}
    \item $\mathrm{P}_p((\epsilon,\epsilon,\epsilon))=\epsilon$,
     $\mathrm{P}_{xp}((\epsilon,\epsilon,\epsilon))= \mathrm{P}_{xy}((\epsilon,\epsilon,\epsilon))=(\epsilon,\epsilon)$, 
     $\forall w_e \in \Sigma^*_e$, and $(x,\sigma,y)\in \Sigma_e$:\\
    $\mathrm{P}_x(w_e \cdot (x,\sigma,y))=\mathrm{P}_x(w_e) \cdot x$,\\
    $\mathrm{P}_p(w_e \cdot (x,\sigma,y))=\mathrm{P}_p(w_e) \cdot \sigma$,\\
    $\mathrm{P}_y(w_e \cdot (x,\sigma,y))=\mathrm{P}_y(w_e) \cdot y$,\\
    $\mathrm{P}_{xp}(w_e \cdot (x,\sigma,y))=\mathrm{P}_{xp}(w_e) \cdot (x,\sigma)$,\\
    $\mathrm{P}_{xy}(w_e \cdot (x,\sigma,y))=\mathrm{P}_{xy}(w_e) \cdot (x,y)$.
\end{itemize}

The extended input-output language of $\mathcal{P}$ captures the generated internal  and the corresponding output behavior of the plant for any possible  environment behavior. It can be formally  defined as follows.

\begin{definition}
The plant's  extended input-output language, denoted by $\mathcal{L}_e(\mathcal{P}) \subseteq \Sigma^*_e$, is defined recursively by:    
\begin{itemize}
\item $\left(\epsilon,\epsilon,\epsilon \right ) \in \mathcal{L}_e(\mathcal{P})$, and
\item for any $ w_e \in \mathcal{L}_e(\mathcal{P})$, and $(x,\sigma,y) \in  ( \Sigma_x \times \Sigma_p \times \Sigma_y )$\\ then 
$w_e \cdot(x,\sigma,y) \in \mathcal{L}_e(\mathcal{P})$, \\iff  $\delta_p(\delta_p(q_{p0},\mathrm{P}_{xp}(w_e)),x,\sigma)!$,\\ and $y = \gamma_p(\delta_p(q_{p0},\mathrm{P}_{xp}(w_e)),\sigma)$.
\end{itemize}
\end{definition}

Similarly, the marked extended input-output language of the plant can be defined as: 
$\mathcal{L}_{e,m}(\mathcal{P})= \{ w_e \in \mathcal{L}_e(\mathcal{P}) \mid \delta_p(q_{p0},\mathrm{P}_{xp}(w_e)) \in Q_{pm}\}.$ 
In the design of reactive supervisor, the input-output language of the plant is  our interest.  Input-output language of the plant $\mathcal{P}$ is $\mathcal{L}_{io}(\mathcal{P})=\{w \in ( \Sigma_x \times \Sigma_y )^* \mid w \in \mathrm{P}_{xy}(\mathcal{L}_e(\mathcal{P}))\}$, and the  input-output marked language is $\mathcal{L}_{io,m}(\mathcal{P})=\{w \in ( \Sigma_x \times \Sigma_y )^* \mid w \in \mathrm{P}_{xy}(\mathcal{L}_{e,m}(\mathcal{P}))\}$.
% \begin{definition}
% Let's denote the input-output language of the plant $\mathcal{P}$ by $\mathcal{L}_{io}(\mathcal{P})$, and the  input-output marked language  by  $\mathcal{L}_{io,m}(\mathcal{P})$. They are respectively  defined by: 
% \begin{itemize}
% \item $\mathcal{L}_{io}(\mathcal{P})=\{w \in ( \Sigma_x \times \Sigma_y )^* \mid w \in \mathrm{P}_{xy}(\mathcal{L}_e(\mathcal{P}))\}$,
% \item $\mathcal{L}_{io,m}(\mathcal{P})=\{w \in ( \Sigma_x \times \Sigma_y )^* \mid w \in \mathrm{P}_{xy}(\mathcal{L}_{e,m}(\mathcal{P}))\}$.
% \end{itemize}
% \end{definition}
The  internal language of $\mathcal{P}$, and its  marked internal language are defined  over execution of internal events, $\Sigma_p$,  and respectively are given as $\mathcal{L}(\mathcal{P})= \{w_p \in \Sigma^*_p \mid w_p \in \mathrm{P}_{p}(\mathcal{L}_e(\mathcal{P}))\}$, and $\mathcal{L}_m(\mathcal{P})= \{w_p \in \Sigma^*_p \mid w_p \in \mathrm{P}_{p}(\mathcal{L}_{e,m}(\mathcal{P}))\}$.
% Internal generated language of $\mathcal{P}$ is defined  over execution of internal events, $\Sigma_p$.
% \begin{definition}
% The  internal language of $\mathcal{P}$, and its  marked internal language are respectively is denoted by $\mathcal{L}(\mathcal{P})$, and $\mathcal{L}_e(\mathcal{P})$, and are defined by:
%  \begin{itemize}
% \item $\mathcal{L}(\mathcal{P})= \{w_p \in \Sigma^*_p \mid w_p \in \mathrm{P}_{p}(\mathcal{L}_e(\mathcal{P}))\}$,
% \item $\mathcal{L}_m(\mathcal{P})= \{w_p \in \Sigma^*_p \mid w_p \in \mathrm{P}_{p}(\mathcal{L}_{e,m}(\mathcal{P}))\}$.
% \end{itemize}
% \end{definition}
% A  state $q \in Q_P$ is called \textit{deadlock} state if $(q,\sigma,q') \not \in \delta_p$ for all $q \neq q'$ and $q' \in Q_p$, and history $(q_0,\gamma_I(q_0),\sigma_0,\gamma_p(q_0,\sigma_0)),  \dots ,(q_n,\gamma_I(q_n),\sigma_{n},\gamma_p(q_n,\sigma_n))$ is a \textit{deadlock history} if the last state, $q_n$, is a deadlock state.
We illustrate the introduced open DES model and the generated languages in the following example.

\begin{example} \label{exp:introduce_openDES}
Let's consider an open DES shown in Fig.~\ref{fig:example1}, where  $\Sigma_p=\{\sigma_{c},\sigma_{u} \}$,  $X=\{x_1,x_2\}$, and  $Y = \{y_1,y_2 \}$. 
An edge in the model is in the form of  $x\sigma \slash y$, where $\sigma \in \Sigma_p$ is the enabled internal event, and $x \in \Sigma_x$ represents the environment event, and $y \in \Sigma_y$ is the generated output event. Multiple labels over an edge indicates  multiple enabled transitions.
For example, a word in extended input-output language is $(\epsilon,\epsilon,\epsilon)(x_1,\sigma_{u},y_1)(x_1,\sigma_{c},y_2) \in \mathcal{L}_{e}(\mathcal{P})$,  and the projection of  it into the input-output language and internal language  of the plant respectively are $ (\epsilon,\epsilon)(x_1,y_1) (x_1,y_2) \in \mathcal{L}_{io}(\mathcal{P})$, and  $ \epsilon \sigma_u \sigma_{c} \in \mathcal{L}(\mathcal{P})$.
%\hfill\ensuremath{\square}
\end{example}  
 \tikzstyle{state}=[rectangle,thick,draw=black!75,
   			  fill=black!20,minimum size=2mm]
  \tikzstyle{state_p}=[circle,thick,draw=black!75,
 			  fill=black!20,minimum size=2mm,inner sep=1mm]  			  
 \begin{figure}[!h]  
\centering
\begin{tikzpicture}[shorten >=1pt,node distance=2.2cm,on grid,auto, bend angle=20, thick,scale=1, every node/.style={transform shape}] 
	\node[state_p,initial left,initial text=] (q_0)   {$\scriptstyle q_0$};
     \node[state_p] (q_1)   [right=of q_0,yshift=0cm] {$\scriptstyle q_1$};
  
    \node[state_p] (q_2) [accepting,right=of q_1,xshift=0cm] {$\scriptstyle q_2$};
            
%	\node[state_p,accepting] (q_3) [below=of q_2,yshift=0cm,xshift=0cm] {$q_3$};
	\path[->]
			(q_2) edge [loop above]   node     [right,align=center, pos=0.6]     {\begin{tabular}{c} $\scriptstyle x_1\sigma_c\slash y_2$\\[-1mm]$\scriptstyle  x_2\sigma_u\slash y_1$\end{tabular}} (q_2)
        (q_0) edge []   node    [below,align=center]     {\begin{tabular}{c}$\scriptstyle x_1\sigma_u\slash y_1$\\[-1mm]$\scriptstyle  x_1\sigma_c\slash y_2$ \end{tabular}} (q_1)
        (q_0) edge [bend left=40]   node [sloped,anchor = north ]       {$ \scriptstyle x_2 \sigma_c \slash y_2$} (q_2)
		(q_1) edge  [ left=35] node [below,align=center ] {\begin{tabular}{c}$\scriptstyle  x_1 \sigma_c \slash y_2$\\[-1mm]$\scriptstyle  x_2\sigma_u\slash y_1$\end{tabular}} (q_2)
%		(q_1) 	edge  [right=45] node [below,align=center] {$\scriptstyle \{x_1,\sigma_{u} \}\slash y_2$\\$\scriptstyle \{x_1,\sigma_{1} \}\slash y_1$} (q_3)
	  %  (q_2) edge [loop above]   node         {$\scriptstyle \{\Sigma_x-\{x_2\},\sigma_{1} \}\slash \{y_2\}$} (q_2)
    %    (q_2) edge []   node [left ,align=center]       {$ \scriptstyle \{x_2,\sigma_{u}\}\slash y_1$\\$ \scriptstyle \{x_2,\sigma_{2}\}\slash y_2$} (q_3) ;
    ;
			\end{tikzpicture}
	\caption{Open discrete event system for Example 1.}
    \label{fig:example1}
		\end{figure}
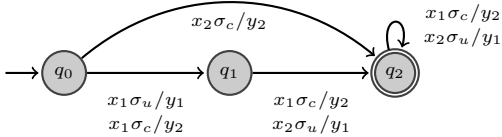

%Ali , at which, the plant's output can be observed by $\mathcal{M}(\sigma)$. 
% The input-output behavior of the plant is defined as $\mathcal{B}\left(\mathcal{P}\right) ({\Sigma_x})^\infty \to \Sigma_y^\infty$.
% Let $w= x_0 x_1 \dots x_k \in \Sigma_x^*$, $k \ge 0$, be a finite word observed from the  environment, the plant's corresponding output  $\mathcal{B}\left(\mathcal{P}\right)(w)$,  is a word in a form of $y_0 y_1 \dots y_k \in \Sigma_y^*$. 
% We use a notation of \textit{sequential input-output relationship} to formalize this behavior \cite{petreczky2009control}.    

\section{Sequential Input-output Behavior }
In this section we aim to characterize  the input-output behaviour of the proposed open DES as a reactive module such that it recognizes a regular language specification. 
The input-output behaviour  a reactive module is typically implemented over a Mealy or Moore transducer \cite{eilenberg1974automata}. 
%A reactive specification defines a relationship between an  input and an output set such that for any subset of inputs this relationship holds. In the reactive synthesis formalism, the input-output relationship is typically defined over a Mealy or Moore transducer \cite{eilenberg1974automata}.

\begin{definition}%[Mealy transducer]
A deterministic {\it Mealy transducer} is a tuple $\mathcal{T} =(Q_k,  \Sigma_x, \Sigma_y, q_{k0},\delta_k, \gamma_k,Q_{km})$,  where $Q_k$ is the finite set of states, $q_{k0} \in Q_k$ is the initial state, $\Sigma_x$ and $\Sigma_y$ are respectively  disjoint set of input and output event sets, 
$\delta_k: Q_k \times \Sigma_x \rightarrow Q_k$ is the transition function,  $\gamma_k: Q_k \times \Sigma_x \rightarrow \Sigma_{y}$ is the state output function, and $Q_{km} \subseteq Q_k$ is the set of marked states. 
%We assumes $\epsilon \in \Sigma_x$ and the output function for empty input is empty $\epsilon=\gamma_k(q,\epsilon)$.
An accepted run in $\mathcal{T} $ over a { finite input word $w_x=x_0x_1\dots x_n \in \Sigma^*_x$,} is denoted as $r=q_0q_1\dots q_n$, where $q_0=q_{k0}$, and $q_{i+1}=\delta_k(q_{i},x_{i})$ for $i\in \{0,\dots,n-1\}$. The run $r$ over input word $w_x$ generates a finite word $L(w_x) \in (\Sigma_x \times \Sigma_y)^*$ such that $L(w_x)_i= (\gamma_k(q_i,x_i), x_i)$ for $i\in \{0,\dots,n-1\}$. The input-output language or simply language of $\mathcal{T} $ is $\mathcal{L}(\mathcal{T} ) = \{  L(w_x) \mid \forall w_x \in \Sigma^*_x \}$, and the marked language  $\mathcal{L}_m(\mathcal{T} ) = \{  L(w_x) \mid \delta_k(q_{k0},w_x) \in Q_{km},  \forall w_x \in \Sigma^*_x  \}$.
A transducer is called \textit{input-complete} if $\delta_k(q,x)! $ for any $q \in Q_k$ and $\forall x \in \Sigma_x$.
\end{definition}
%\subsection{Sequential Input-output Behavior }
The proposed open DES, $\mathcal{P}$, at the state $q \in Q_p$,   reads the environment input $x \in \Sigma_x$, executes an available internal event $\sigma \in \Sigma_p$,  transits according to the transition function $\delta_p$, and generates the output events through $\gamma_p(q,\sigma)$. 
We use a notation of \textit{sequential input-output relationship} to formalize this behavior.
%, and furthermore, characterize the conditions  which  guarantees there exitence of Mealy or Moore transducer, $\mathcal{T}$,  that  accepts input-output behaviour of $\mathcal{P}$, i.e., $\mathcal{L}_{io}(\mathcal{P})=\mathcal{L}(\mathcal{T})$ \cite{Lawson2005}.

\begin{definition} %[{Sequential input-output relation}]
A  function $\mathcal{R}:{\Sigma^*_x} \to {\Sigma^*_y}$ is called sequential input-output relation  if the following conditions hold:  
\begin{description}%[label={}]
\item[$C_1:$] $\mathcal{R}(\epsilon)= \epsilon$, and for all $w_x  \in \Sigma^*_x$, $\mathcal{R}(w_x)$ is defined.
\item[$C_2:$] For all $w_x \in \Sigma^*_x, |w_x|=|\mathcal{R}(w_x)|$.
\item[$C_3:$] $\mathcal{R}$ is prefix preserving, i.e.,
for all $w_x,v_x \in \Sigma^*_x$, if $v_x \in \overline{w_x}$ then $ \mathcal{R}(v_x) \in \overline{\mathcal{R}(w_x)}$.
%\item[$C_3:$] $\mathcal{R}$ is prefix preserving, i.e.,
%for all $w_x,v_x \in \Sigma^*_x$, if $v_x \in \overline{w_x}$ then $ \mathcal{R}(v_x) \in \overline{\mathcal{R}(w_x)}$.
% explain Ali, use "Hierarchical Control of Discrete-Event Systems " paper
\end{description}
\end{definition}
%The above  conditions ensure that $\mathcal{R}$ is non-empty  and  preserves the same length for input and output words.
%A partial function $\mathcal{R} : \Sigma^*_x  \to \Sigma^*_y$ is prefix-preserving if $dom( \mathcal{R})$ is prefix-closed, and for all $u, w \in  dom( \mathcal{R})$, if $u \in   \overline{w}$ then $ \mathcal{R}(u) \in \overline{\mathcal{R}(w)}$ \cite{Lawson2005}. 
If input-output  behaviour of an open DES, $\mathcal{P}$, satisfies the properties of  sequential input-output relationship, then   
there exists a finite-state Mealy transducer, $\mathcal{T}$, such that  $\mathcal{L}_{io}(\mathcal{P})=\mathcal{L}(\mathcal{T})$ \cite{Lawson2005}.
% \begin{lemma}\cite{Lawson2005}
% A function is the output response function of a
% finite Mealy machine if and only if the input-output relation 
% \end{lemma}

\begin{proposition} \label{pro:seq_in_out}
The input-output language of an open DES  $\mathcal{P}$ respects the sequential input-output relation $\mathcal{R}$.
 \end{proposition}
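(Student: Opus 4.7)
The plan is to define the relation $\mathcal{R}$ directly from the plant's input-output language, namely by setting $\mathcal{R}(w_x)=\{w_y\in\Sigma_y^*:(w_x,w_y)\in\mathcal{L}_{io}(\mathcal{P})\}$, where a word in $(\Sigma_x\times\Sigma_y)^*$ is identified with a length-matched pair via componentwise extraction. With this interpretation, the three defining conditions $C_1$, $C_2$, $C_3$ can be verified one by one from the recursive definitions of $\mathcal{L}_e(\mathcal{P})$ and $\mathcal{L}_{io}(\mathcal{P})$, together with the input-enabledness assumption on $\mathcal{P}$.

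For $C_1$, I would argue by induction on $|w_x|$. The base case $w_x=\epsilon$ is immediate since $(\epsilon,\epsilon,\epsilon)\in\mathcal{L}_e(\mathcal{P})$ by definition, giving $\epsilon\in\mathcal{R}(\epsilon)$. For the inductive step, write $w_x=w_x'x$ and pick any $w_y'\in\mathcal{R}(w_x')$, whose existence is guaranteed by the hypothesis; this corresponds to some trace $w\in\mathcal{L}_e(\mathcal{P})$ with $\mathrm{P}_{xy}(w)=(w_x',w_y')$, ending at a state $q=\delta_p(q_{p0},\mathrm{P}_{xp}(w))$. Input-enabledness supplies $\sigma\in\Sigma_p$ with $\delta_p(q,x,\sigma)!$, and setting $y=\gamma_p(q,\sigma)$ yields $w(x,\sigma,y)\in\mathcal{L}_e(\mathcal{P})$ by the recursive definition, so $w_y'y\in\mathcal{R}(w_x)$.

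For $C_2$, a straightforward induction on the length of a trace $w\in\mathcal{L}_e(\mathcal{P})$, using the inductive clauses $\mathrm{P}_x(w(x,\sigma,y))=\mathrm{P}_x(w)x$ and $\mathrm{P}_y(w(x,\sigma,y))=\mathrm{P}_y(w)y$, shows $|\mathrm{P}_x(w)|=|\mathrm{P}_y(w)|=|w|$, so every $(w_x,w_y)\in\mathcal{L}_{io}(\mathcal{P})$ satisfies $|w_x|=|w_y|$. For $C_3$, I would first observe that $\mathcal{L}_e(\mathcal{P})$ is prefix-closed: the recursive definition of $\mathcal{L}_e(\mathcal{P})$ contains the conjunct $w\in\mathcal{L}_e(\mathcal{P})$ on the right side of its biconditional for $w(x,\sigma,y)$, so any one-step prefix of a member is a member, and iterating gives closure under arbitrary prefixes. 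Since $\mathrm{P}_{xy}$ sends each triple to exactly one pair, it preserves prefixes, and therefore $\mathcal{L}_{io}(\mathcal{P})$ is prefix-closed, which translates exactly into prefix preservation of $\mathcal{R}$.

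The main obstacle is not any single calculation but a typing subtlety: $\mathcal{R}$ is declared as a function $\Sigma_x^*\to\Sigma_y^*$, yet $C_1$ asks for $\mathcal{R}(w_x)\neq\emptyset$, which only makes sense if $\mathcal{R}$ is set-valued (as the open DES may allow several internal events for a given input and hence multiple output words). I would therefore explicitly adopt the set-valued reading at the start of the proof. Once this is fixed, the remainder reduces to carefully chaining the recursive definitions and invoking input-enabledness in the inductive step of $C_1$, which is the only place where a nontrivial property of $\mathcal{P}$ is used.
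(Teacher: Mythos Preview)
Your proposal is correct and follows essentially the same route as the paper: define the relation from the plant's input-output language, invoke input-enabledness for $C_1$, use the per-step structure of the projections (equivalently, determinism of $\gamma_p$) for $C_2$, and derive $C_3$ from prefix-closure of $\mathcal{L}_e(\mathcal{P})$. Your version is more explicit (spelling out the inductions) and your remark about the set-valued reading of $\mathcal{R}$ correctly flags a typing looseness that the paper glosses over.
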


\begin{proof}
Since $(\epsilon,\epsilon) \in \mathcal{L}_{io}(\mathcal{P})$, and all the states of $\mathcal{P}$ are input-enabled, $C_1$ trivially holds.
We first prove $C_3$ holds. 
Consider any $w_x,v_x  \in \Sigma^*_x$, such that $v_x \in \overline{w_x}$. Note that since all the states in plant $\mathcal{P}$  are input-enabled, the plant recognizes any $w_x \in \Sigma^*_x$, and hence, there should exist $w_e \in \mathcal{L}_{e}(\mathcal{P})$ such that $\mathrm{P}_{x}(w_e)=w_x$.  Furthermore since  $\mathcal{L}_{e}(\mathcal{P})$ is prefix-closed,  and  plant recognizes $v_x$, there should exist  some $v_e \in \mathcal{L}_{e}(\mathcal{P})$ such that $v_e \in \overline{w_e}$ and $\mathrm{P}_{x}(v_e)=v_x$. 
Since the plant transition and output functions are deterministic, if $v_e \in \overline{w_e}$, we have $\mathrm{P}_{y}(v_e) \in \mathrm{P}_{y}(\overline{w_e})$, and thus $C_3$ holds.

Moreover, since $\delta_p$, and $\gamma_p$  are deterministic functions, for any environment input word, the plant generates an output word with the same length, i.e., $\forall w_e \in \mathcal{L}_{e}(\mathcal{P})$, we have $|\mathrm{P}_{x}(w_e)|=|\mathrm{P}_{y}(w_e)|$, and therefore $C_2$ holds. 
%Let's denote $w_y=\mathrm{P}_{y}(w_e)$, and $v_y=\mathrm{P}_{y}(v_e)$.
%Since $v_e \in \overline{w_e}$, 
%we have $ \mathrm{P}_{y}(\overline{w_e})=w_y$
%  For any $ w_x \in \Sigma^*_x$, there always exist a $L_w \in \overline{\mathcal{L}_e(\mathcal{P})}$, such that $\mathrm{P}_{x}(L_w)=w$. For any $w \in \Sigma^*_x$, and $L_w$, let's define a relationship function $\mathcal{R}_p(w)=\{w_y \in  \Sigma^*_y \mid \mathrm{P}_{x}(L_e)=w \leftrightarrow w_y= \mathrm{P}_{y}(L_e) \}$. The relation $\mathcal{R}_p$  meets $C_2$, since $\gamma_p$ is a deterministic function, and therefore for any $w_x \in \Sigma_x$, the plant $\mathcal{P}$ generates an output word with the same length, i.e., $|w_y|=|\mathcal{R}_p(w_x)|$.  Furthermore $C_3$ also holds on $\mathcal{R}_p$ as  $\mathcal{L}_{e}(\mathcal{P})$ is closed under taking  prefix operation.
\end{proof}

\section{Reactive Supervisory Control Problem}\label{sec:reactive_control} 	 
% The  architecture of reactive supervisory and an open DES is shown in Fig.~\ref{fig:rec_sup_arch}. 
 With the proposed open DES model, we aim to design a supervisor to control the plant with respect to a reactive specification.
 In our setup, the reactive supervisor observes history of  the environment input words, and the plant internal  language, and then chooses a \textit{control pattern}  for the plant. 
 Intuitively control pattern is a set of events that the supervisor permits to be executed  at any given state of the plant. Any control pattern must include the uncontrollable events, since no supervisor can  disable them. 
 Control pattern set is  defined as following and it is illustrated in Example \ref{exp:control_pattern}.{ 
 \begin{definition}
 Set of all control patterns in an open DES $\mathcal{P}$,  is defined as  $\Theta=\{\theta \in 2^{\Sigma_p} \mid \Sigma_u \subseteq \theta \}$. 
 \end{definition}
\begin{example}\label{exp:control_pattern}
Consider the open DES in Fig. \ref{fig:example1} with $\Sigma_{c}=\{\sigma_c\}$, and  $\Sigma_{uc}=\{ \sigma_u\}$. The control patterns are $\theta_1=\{\sigma_c,\sigma_u\}$,  $\theta_2=\{\sigma_u\}$.
 \hfill\ensuremath{\square}
 \end{example}
Another ingredient for the presented reactive supervisory control framework  is the histories of the environment and plant behaviors. We define this history as $His_x(\mathcal{P}):=\mathrm{P}_{xy}(\mathcal{L}_e(\mathcal{P})) \cdot \Sigma_x$ that captures the  environment inputs and the occurred internal events of the plant. 

% \begin{definition}
% Consider a  function  $His:\Sigma^*_e \to (\Sigma_x\cup\Sigma_p)^*$ which is defined inductively as follows: 
% \begin{itemize}
%     \item $His((\epsilon,\epsilon,\epsilon))=\epsilon\epsilon$,
%     \item  $\forall w_e \in \Sigma^*_e$, and $(x,\sigma,y)\in \Sigma_e$\\
%             $His(w_e.(x,\sigma,y))=His(w_e).x\sigma$.  
% \end{itemize}
%Denoted by $His(\mathcal{L}_e(\mathcal{P}))=\{His(w_e)\mid w_e \in  \mathcal{L}_e(\mathcal{P})\}$ is the history of the environment and the plant behaviour.
%\end{definition}
 The reactive supervisor is defined as a function $\mathcal{S}: His_x(\mathcal{P})\to \Theta$.}
 The supervised open DES then only executes a state transition if  it is not restricted by the environment input, and  it is allowed by the supervisor. 
 We  denote the supervised plant  by $\mathcal{S} \slash \mathcal{P}$, and  define its closed-loop languages in the following definition.
 \begin{definition}
The extended input-output language of supervised  plant $\mathcal{L}_e(\mathcal{S} \slash \mathcal{P})$ is recursively defined by:
\begin{itemize}
\item $\left(\epsilon,\epsilon,\epsilon \right) \in \mathcal{L}_e(\mathcal{S} \slash \mathcal{P})$, and
\item for any $w_e \in \mathcal{L}_e(\mathcal{S} \slash \mathcal{P})$, and $(x,\sigma,y) \in  \Sigma_e$ then $w_e(x,\sigma,y) \in \mathcal{L}_e(\mathcal{S} \slash \mathcal{P})$ iff $w_e(x,\sigma,y) \in \mathcal{L}_e(\mathcal{P})$,   and $\sigma \in \mathcal{S}(\mathrm{P}_{xy}(w_e) \cdot x)$.
\end{itemize}
and the marked  set of that is  $\mathcal{L}_{e,m}(\mathcal{S} \slash \mathcal{P})= \mathcal{L}_e(\mathcal{S} \slash \mathcal{P}) \cap   \mathcal{L}_{e,m}( \mathcal{P})$.
The input-output language of  $\mathcal{S} \slash \mathcal{P}$ also  is defined by $\mathcal{L}_{io}(\mathcal{S} \slash \mathcal{P})=\{w \in ( \Sigma_x \times \Sigma_y )^* \mid w \in \mathrm{P}_{xy}(\mathcal{L}_e(\mathcal{S} \slash \mathcal{P}))\}$, and $\mathcal{S} \slash \mathcal{P}$ marked input-output language  is $\mathcal{L}_{io,m}(\mathcal{S} \slash \mathcal{P})=\mathcal{L}_{io}(\mathcal{S} \slash \mathcal{P}) \cap \mathcal{L}_{io,m}( \mathcal{P})$. 
\end{definition}

We aim to design a reactive supervisor with
a pessimistic environment assumption, meaning, the open
DES’s environment is free to behave as it pleases. Therefore,
the input-complete property ensures that the  specification
has defined an output for any environment behaviour.

\begin{definition}
We call  language $K \subseteq (\Sigma_x \times \Sigma_y)^*$ a finite regular reactive specification if it can be realized by an input-complete transducer $\mathcal{T} $ such that  $K = \mathcal{L}_m(\mathcal{T} )$.
\end{definition}

Given a reactive specification $K$ for a plant $\mathcal{P}$, the reactive supervisor task is to control the internal controllable events such that the input-output behavior of  supervised plant meets the specification, $\mathcal{L}_{io,m}(\mathcal{S} \slash \mathcal{P})=K$.

An important extra requirement for supervisory control of DES is a non-blocking property \cite{ramadge1987supervisory}. The  non-blocking condition  ensures that the supervisor does not disable all the controllable events in one state which results  a deadlock state in the supervised plant. This condition conventionally is defined by  $\overline{\mathcal{L}_{m}(\mathcal{S} \slash \mathcal{P})} = {\mathcal{L}(\mathcal{S} \slash \mathcal{P})}$ \cite{ramadge1987supervisory}. 
In open DESs, however, the situation could be more complicated.
\newcommand*{\lc}{\mathcal{S} \slash \mathcal{P}}
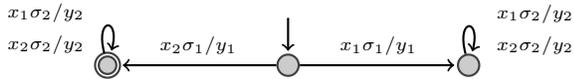
\begin{figure}[!h]  
\centering
\begin{tikzpicture}[shorten >=1pt,node distance=2.4cm,on grid,auto, bend angle=20, thick,scale=1, every node/.style={transform shape}] 
	\node[state_p,initial above,initial text=] (q_0)   {};
     \node[state_p] (q_1)   [left=of q_0,accepting,yshift=0cm] {};
    \node[state_p] (q_2) [right=of q_0,xshift=0cm] {};
	\path[->]
	 (q_0) edge []   node    [above]     {$\scriptstyle x_2\sigma_{1} \slash y_1$} (q_1)
	 (q_0) edge []   node   [above]      {$\scriptstyle x_1\sigma_{1} \slash y_1$} (q_2)
%	 (q_2) edge []   node    [below]     {$\scriptstyle x_2,\sigma_{2} \slash y_2$} (q_3)
	(q_1) edge [loop above]   node     [left,align=center, pos=0.6]     {\begin{tabular}{c}$\scriptstyle x_1\sigma_2\slash y_2$\\$\scriptstyle  x_2\sigma_2\slash y_2$\end{tabular}} (q_1)
	 (q_2) edge [loop above ]   node   [right,align=center, pos=0.6]      {\begin{tabular}{c} $\scriptstyle x_1\sigma_{2} \slash y_2$\\ $\scriptstyle x_2\sigma_{2} \slash y_2$ \end{tabular} } (q_0)
	 
	 ;
			\end{tikzpicture}
	\caption{Open discrete event system for Example \ref{exp:nonblocking}. }
    \label{fig:example2}
		\end{figure} 
{ 	
\begin{example}\label{exp:nonblocking}
Consider the open DES shown in Fig.~\ref{fig:example2}, with  $\Sigma
_c=\Sigma_p$. Let the reactive supervisor allows all the internal events, i.e. $\mathcal{S}(h)=\Sigma_c$ for  all
$h \in His_x(\mathcal{P})$. 
% \begin{align*}
% \mathcal{S}(h)=\left\{
%                 \begin{array}{ll}
%                   \sigma_{1} & h=\epsilon x_1 \vee \epsilon x_1	\sigma_{1}  x_2 \vee \epsilon x_2	\sigma_{1}  x_2	\\
%                   \sigma_{2} & h=\epsilon x_2 \vee \epsilon x_2	\sigma_{2}  x_1 \vee \epsilon x_1	\sigma_{1}  x_1\\
%                  % \sigma_{1} & w= \epsilon x_1	\sigma_{1}  x_1	\\
%                 %  \sigma_{2} & w=\ \epsilon x_1	\sigma_{2}  x_1\\
%                   %\sigma_{2} &  w =\epsilon x_1	\sigma_{1}  x_1	\\
%                   \Sigma_c & h=His(\mathcal{L}_e(\mathcal{P})).
%                 \end{array}
%              \right.
% \end{align*}
The  internal language for supervised plant are $\mathcal{L}(\lc)=\{\epsilon,\sigma_{1},\sigma_{1}\sigma^n_{2} \mid n \in \mathbb{N} \}$, where  $\sigma_{1}\sigma^*_{2}$ is omitted as   $\sigma_{1}\sigma^*_{2}+\sigma_{1}\sigma^*_{2}=\sigma_{1}\sigma^*_{2}$,  and the marked language is $\mathcal{L}_{m}(\lc)=\{\sigma_{1}\sigma^n_{2} \mid n \in  \mathbb{N} \}$. The conventional non-blocking condition holds here, $\overline{\mathcal{L}_{m}(\mathcal{S} \slash \mathcal{P})} = {\mathcal{L}(\mathcal{S} \slash \mathcal{P})}$, however, the environment can force the supervised plant to stay in the livelock by  providing input word $w_x \in \{ x^n_1\mid n \in \mathbb{N}\} $ or $w_x \in \{ x_1x^n_2\mid k \in \mathbb{N}\}$.  
 \hfill\ensuremath{\square}
\end{example} }
As Example~\ref{exp:nonblocking} illustrated, we need a stronger non-blocking definition that can be held for any environment behavior. { 
\begin{definition}\label{def:non-blocking}
A reactive supervisor $\mathcal{S}: His_x(\mathcal{P}) \to \Theta$ is said to be  non-blocking if $   \overline{\mathcal{L}_{e,m}(\lc)}=\mathcal{L}_{e}(\lc)$.
\end{definition}}
The Non-blocking requirement guarantees all  prefix of the supervised plant can be extended to the mark states.  
%It can be shown that the non-blocking property of a reactive supervisor guarantees  all  prefix of the supervised plant can be extended to the mark states. 
{  
\begin{proposition}\label{prop:non-blocking}
A non-blocking reactive supervisor $\mathcal{S}: His_x(\mathcal{P}) \to \Theta$ satisfies  $\overline{\mathcal{L}_{io,m}(\mathcal{S} \slash \mathcal{P})}={\mathcal{L}_{io}(\mathcal{S} \slash \mathcal{P})}$.
\end{proposition}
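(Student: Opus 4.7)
The plan is to prove the two inclusions separately. For the easy direction $\overline{\mathcal{L}_{io,m}(\lc)} \subseteq \mathcal{L}_{io}(\lc)$, I would first observe that $\mathcal{L}_e(\lc)$ is prefix-closed (directly from its recursive definition, just as for $\mathcal{L}_e(\mathcal{P})$ in Proposition~\ref{pro:seq_in_out}), and that projections preserve prefix-closure, so $\mathcal{L}_{io}(\lc)=\mathrm{P}_{xy}(\mathcal{L}_e(\lc))$ is prefix-closed. Combined with $\mathcal{L}_{io,m}(\lc) \subseteq \mathcal{L}_{io}(\lc)$, this yields $\overline{\mathcal{L}_{io,m}(\lc)} \subseteq \overline{\mathcal{L}_{io}(\lc)} = \mathcal{L}_{io}(\lc)$.

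For the reverse inclusion $\mathcal{L}_{io}(\lc) \subseteq \overline{\mathcal{L}_{io,m}(\lc)}$, I would pick an arbitrary $w \in \mathcal{L}_{io}(\lc)$ and use the definition of the input-output language to obtain an extended trace $L_w \in \mathcal{L}_e(\lc)$ with $\mathrm{P}_{xy}(L_w)=w$. Projecting, $\mathrm{P}_{xp}(L_w) \in \mathrm{P}_{xp}(\mathcal{L}_e(\lc))$, so by the non-blocking hypothesis (Definition~\ref{def:non-blocking}) there is some $u \in (\Sigma_x \times \Sigma_p)^*$ with $\mathrm{P}_{xp}(L_w)\,u \in \mathrm{P}_{xp}(\mathcal{L}_{e,m}(\lc))$. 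Hence there exists $L' \in \mathcal{L}_{e,m}(\lc)$ whose input-internal projection is $\mathrm{P}_{xp}(L_w)\,u$.

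The last step is to argue that $w$ is a prefix of $\mathrm{P}_{xy}(L')$, so that $w \in \overline{\mathrm{P}_{xy}(\mathcal{L}_{e,m}(\lc))} = \overline{\mathcal{L}_{io,m}(\lc)}$. This is the main technical point, and it hinges on determinism: because $\delta_p$ is a (partial) deterministic transition function and $\gamma_p$ deterministically assigns an output to each state-event pair, the first $|L_w|$ triples of $L'$ are completely determined by the shared prefix $\mathrm{P}_{xp}(L_w)$ of their input-internal projections; they must therefore coincide with $L_w$ componentwise, and in particular their $\mathrm{P}_{xy}$-projections agree on the first $|w|$ letters.

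I expect the only subtle point to be the justification in the final paragraph, namely that equality of the input-internal projection on a prefix forces equality of the full extended trace on that prefix. Everything else is either routine bookkeeping with the projection operators or a direct appeal to the non-blocking definition; once determinism of $\delta_p$ and $\gamma_p$ is invoked, the input-output prefix extension follows and the proposition is established.
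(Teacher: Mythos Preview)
Your proposal is correct and follows the same idea as the paper's proof, which is a one-line remark that the result is immediate from Definition~\ref{def:non-blocking} because the deterministic output function $\gamma_p$ does not affect the non-blocking property. You have simply spelled out in full the argument the paper leaves implicit, correctly identifying determinism of $\delta_p$ and $\gamma_p$ as the reason the $xp$-projection non-blocking condition transfers to the $xy$-projection.
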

\begin{proof}
%This proposition is a directed implication of the non-blocking property given in Definition~\ref{def:non-blocking}, since the output function $\gamma_p$ does not effect on non-blocking  property of the reactive supervisor.
It is sufficient to show that if the supervisor is non-blocking, $\overline{\mathcal{L}_{io,m}(\mathcal{S} \slash \mathcal{P})}={\mathcal{L}_{io}(\mathcal{S} \slash \mathcal{P})}$, then ${\mathcal{L}_{io}(\lc)} \subseteq \overline{\mathcal{L}_{io,m}(\lc)}$ holds.
 For any $w \in \mathcal{L}_{io}(\lc)$, there exist $w_e \in \mathcal{L}_{e}(\lc)=\overline{\mathcal{L}_{e,m}(\lc)}$ such that $\mathrm{P}_{xy}(w_e)=w$. % Since the supervisor is non-blocking
 Therefore, since $w_e \in \overline{\mathcal{L}_{e,m}(\lc)}$, there must exist $w'_e \in \mathcal{L}_{e,m}(\lc)$, such that $w_e \in \overline{w'_e}$. Moreover based on the definition of $\mathcal{L}_{io,m}(\lc)$, we have $\overline{\mathrm{P}_{xy}(w'_e)} \subseteq \overline{\mathcal{L}_{io,m}(\lc)}$.
 Hence, $w=\mathrm{P}_{xy}(w_e) \in  \overline{\mathrm{P}_{xy}(w'_e)} \subseteq \overline{\mathcal{L}_{io,m}(\lc)}$.
\end{proof}}

%The control objective a non-blocking reactive supervisor is to  regulate the internal behavior of plant such that the input-output language satisfies a reactive specification. This problem is formally defined as follows. 

Now we are ready to formulate the reactive supervisor control problem for an open DES as follows.  
\begin{problem}
Given a non-empty regular reactive specification ${K} \subseteq \mathcal{L}_{io,m}(\mathcal{P})$, uncontrolled deterministic open DES $\mathcal{P}$, and control pattern set $\Theta$, synthesis a non-blocking reactive supervisor $\mathcal{S}: His_x(\mathcal{P}) \to \Theta$, such that $ \mathcal{L}_{io,m}(\mathcal{S} \slash \mathcal{P})=K$.
\end{problem}

\section{The Existence of Supervisor}\label{sec:reactive_sup_design} 

The first question we ask is under what condition such a supervisor exists. In Ramadge-Wonham supervisory control of DES, the controllability of a specification is a necessary and sufficient condition for the existence of a supervisor \cite{ramadge1989control}. The conventional controllability of a language specification ${K}$, is defined with respect to uncontrollable events $\Sigma_u$, and the plant's internal language $\mathcal{L}(\mathcal{P})$  as $\overline{{K}}(\Sigma_u) \cap \mathcal{L}(\mathcal{P}) \subseteq \overline{{K}}$. Intuitively, this requirement asks if there  is any prefix of the specification, $s \in \overline{K}$, that exists in the uncontrolled plant language  $s \in \mathcal{L}(\mathcal{P})$, and is  followed by an  internal  uncontrollable event $\sigma_u \in \Sigma_u$, $s\sigma_u$ also must be included in the specification, $s\sigma_u \in \overline{K}$, since the supervisor can not disable $\sigma_u$.
In open DES setup however,  the uncontrollability of a reactive specification not only depends on the internal uncontrollable events, but also can be caused by an  input word that enforces  the execution of an undesired output event.
We therefore propose to formally define uncontrollable input-output set, denoted by $\Sigma^{io}_u$, that captures the plant output corresponding to the uncontrollable internal events. 

% Another consideration here is  the uncontrollable outputs may not be unique. For instance, in Example \ref{exp:introduce_openDES}, with $\Sigma_{c}=\{\sigma_1,\sigma_2\}$, and  $\Sigma_{uc}=\{ \sigma_u\}$, output $y_1$ is controllable when the plant is at  $q_2$, and it is uncontrollable at state $q_1$ for input $x_2$. In order to have a unique uncontrollable input-output set, we assume  the following condition on the output function of plant $\mathcal{P}$. 
% \begin{assumption}
% For any $q,q'\in Q_p$, $\sigma_u \in \Sigma_u$, and $\sigma_c \in \Sigma_{c}$, such that $\delta(q,\sigma_c,x)$! and $\delta(q',\sigma_{uc},x)$! for some $x \in \Sigma_x$, $\gamma_p(q,\sigma_{c})\neq \gamma_p(q',\sigma_{uc})$ holds.
% \end{assumption}
% Note that, this assumption is not restrictive, since, we can always assume re-label the output alphabets to obtain unique uncontrollable input-output set.
\begin{definition} 
Consider an open DES $\mathcal{P}=(Q_p,\Sigma_p=\Sigma_c\cup\Sigma_{uc},\Sigma_x,\Sigma_y, q_{p0},\delta_p,\gamma_p,Q_{pm})$, the uncontrollable input-output event set of $\mathcal{P}$ is denoted by $\Sigma^{io}_u$, and is defined by
$\Sigma^{io}_u=\{(x,y) \in \Sigma_x \times \Sigma_y \mid \text{for any } \sigma_u \in \Sigma_u : \delta_p(q,x,\sigma_u)! \text{ and } y = \gamma_p(q,\sigma_u) \text{ for some } q \in Q_p\}$.
\end{definition}

The controllability of a reactive specification is only depend on the plant's output language since the  input component of the specification $\mathrm{P}_{x}(K)$, is generated by the  environment.
%, and  any input word $w_x \in \Sigma^*_x$, by the definition of reactive specification,  should be  included in the  specification, i.e.,   $w_x \in \mathrm{P}_{x}(K)$ for all $ w_x \in \Sigma^*_x$.
Accordingly, we can define the input-output controllable language in the reactive supervisory control setup as following. 
\begin{definition}
A non-empty reactive specification $K \subseteq \mathcal{L}_{io}(\mathcal{P})$ is called output controllable with respect to $\Sigma^{io}_u$, and  $\mathcal{L}_{io}(\mathcal{P})$, if $\overline{{K}}(\Sigma^{io}_u) \cap \mathcal{L}_{io}(\mathcal{P}) \subseteq \overline{{K}}$.
\end{definition}

In addition to the output controllability condition defined above, the  specification is also required to be closed \cite{ramadge1989control}.{ 
\begin{definition}
A reactive specification ${K}$ is called closed with respect to  $\mathcal{L}_{io,m}(\mathcal{P})$, in short $\mathcal{L}_{io,m}(\mathcal{P})$-closed, if   $K=\overline{{K}} \cap  \mathcal{L}_{io,m}(\mathcal{P})$. 
\end{definition}}

The following lemma shows output controllability and $\mathcal{L}_{io,m}(\mathcal{P})$-closedness are  necessary conditions for the existence of a non-blocking reactive supervisory control.

 \begin{lemma} %{[Necessary Condition]} 
 \label{lem:nec_cond}
 If there exist a non-blocking reactive supervisor $\mathcal{S}: His_x(\mathcal{P}) \to \Theta$ such that $\mathcal{L}_{io,m}(\mathcal{S} \slash \mathcal{P})={K}$, then ${K}$ is output controllable and  $\mathcal{L}_{io,m}(\mathcal{P})$-closed.
 \end{lemma}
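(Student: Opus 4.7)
The plan is to prove the two conclusions separately, but in both cases the starting lever is Proposition~\ref{prop:non-blocking}: since $\mathcal{S}$ is non-blocking and $\mathcal{L}_{io,m}(\mathcal{S}/\mathcal{P})=K$, we get $\overline{K}=\mathcal{L}_{io}(\mathcal{S}/\mathcal{P})$. I would put this identity on the first line and use it repeatedly, since it is what lets me move between prefixes of the specification and traces physically executable by the closed loop.

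For output controllability, I would take an arbitrary $s(x,y)\in \overline{K}\,\Sigma^{io}_u \cap \mathcal{L}_{io}(\mathcal{P})$ and show $s(x,y)\in\overline{K}$. From $s\in\overline{K}=\mathcal{L}_{io}(\mathcal{S}/\mathcal{P})$ I pick an extended trace $t_s\in\mathcal{L}_e(\mathcal{S}/\mathcal{P})$ with $\mathrm{P}_{xy}(t_s)=s$; because $\mathcal{S}$ is non-blocking this $t_s$ exists. From $(x,y)\in\Sigma^{io}_u$ I extract an uncontrollable $\sigma_u\in\Sigma_u$ witnessing the input/output pair, and from $s(x,y)\in\mathcal{L}_{io}(\mathcal{P})$ I obtain a plant trace $t'\in\mathcal{L}_e(\mathcal{P})$ projecting to $s(x,y)$. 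I then argue that the concatenation $t_s(x,\sigma_u,y)$ lies in $\mathcal{L}_e(\mathcal{P})$ (because the plant state reached along $t_s$ admits this uncontrollable transition) and hence also in $\mathcal{L}_e(\mathcal{S}/\mathcal{P})$, since every control pattern $\theta\in\Theta$ contains $\Sigma_u$ and therefore $\sigma_u\in\mathcal{S}(\mathrm{P}_{xp}(t_s))$. Projecting back through $\mathrm{P}_{xy}$ gives $s(x,y)\in \mathcal{L}_{io}(\mathcal{S}/\mathcal{P})=\overline{K}$, which is the required containment.

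For $\mathcal{L}_{io,m}(\mathcal{P})$-closedness, the inclusion $\mathrm{P}_{y}(K)\subseteq\overline{\mathrm{P}_{y}(K)}\cap\mathrm{P}_{y}(\mathcal{L}_{io,m}(\mathcal{P}))$ is immediate from $K\subseteq\mathcal{L}_{io,m}(\mathcal{P})$ and the tautology $\mathrm{P}_y(K)\subseteq\overline{\mathrm{P}_y(K)}$. For the reverse inclusion I would take $w_y\in\overline{\mathrm{P}_y(K)}\cap\mathrm{P}_y(\mathcal{L}_{io,m}(\mathcal{P}))$, use $\overline{\mathrm{P}_y(K)}\subseteq \mathrm{P}_y(\overline{K})=\mathrm{P}_y(\mathcal{L}_{io}(\mathcal{S}/\mathcal{P}))$ to find a supervised trace whose output projection is $w_y$, and then apply the non-blocking hypothesis (Definition~\ref{def:non-blocking}) to extend this trace to a marked one in the supervised plant; the output projection of that extension will lie in $\mathrm{P}_y(K)$ and, because its prefix equals $w_y$, yields $w_y\in\mathrm{P}_y(K)$ after noting that $w_y$ is already realised as the output of a plant-marked trace.

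The step I expect to be the main obstacle is the lifting argument in output controllability: in principle $s$ may admit several extended realisations in $\mathcal{L}_e(\mathcal{P})$ and the state of the plant after $t_s$ need not be the same state at which the witnessing uncontrollable transition $(x,\sigma_u,y)$ was available when certifying $(x,y)\in\Sigma^{io}_u$. I would handle this by combining the two traces $t_s$ and $t'$ through the partial determinism of $\delta_p$ and $\gamma_p$, arguing that the uncontrollable $\sigma_u$ guaranteeing $(x,y)\in\Sigma^{io}_u$ is enabled at the state reached along $t_s$ whenever $s(x,y)\in\mathcal{L}_{io}(\mathcal{P})$; this is the only place where the definition of $\Sigma^{io}_u$ has real bite. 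Everything else (the projection algebra, prefix-closure, and the reduction to $\mathcal{L}_e(\mathcal{S}/\mathcal{P})$) is routine bookkeeping on top of Proposition~\ref{prop:non-blocking}.
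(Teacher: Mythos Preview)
Your treatment of output controllability is essentially the paper's argument: both lift $s\in\overline{K}=\mathcal{L}_{io}(\mathcal{S}/\mathcal{P})$ to an extended trace $w_e\in\mathcal{L}_e(\mathcal{S}/\mathcal{P})$, append an internal event that is enabled by the current control pattern, and project back via $\mathrm{P}_{xy}$. You are in fact more explicit than the paper in naming the uncontrollable witness $\sigma_u$ and in invoking $\Sigma_u\subseteq\theta$ for every $\theta\in\Theta$; the paper simply asserts the existence of a suitable $\sigma\in\mathcal{S}(\mathrm{P}_{xp}(w_e))$ without isolating the uncontrollability. Your self-identified ``lifting obstacle'' (whether the $\sigma_u$ certifying $(x,y)\in\Sigma^{io}_u$ is enabled at the state reached along $t_s$) is real and is glossed over in the paper as well.

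For $\mathcal{L}_{io,m}(\mathcal{P})$-closedness, however, your route diverges from the paper's and carries a gap. The paper does not chase elements or extend traces at all: from $\overline{K}=\mathcal{L}_{io}(\mathcal{S}/\mathcal{P})$ (Proposition~\ref{prop:non-blocking}) together with the definitional identity $\mathcal{L}_{io,m}(\mathcal{S}/\mathcal{P})=\mathcal{L}_{io}(\mathcal{S}/\mathcal{P})\cap\mathcal{L}_{io,m}(\mathcal{P})$ it reads off $\overline{K}\cap\mathcal{L}_{io,m}(\mathcal{P})=K$ in one line, and only then passes to $\mathrm{P}_y$. Your trace-extension argument, by contrast, takes a supervised trace with output $w_y$ and extends it (via non-blocking) to a marked one; but the output of that extension is a word \emph{extending} $w_y$, not $w_y$ itself, so you land in $\overline{\mathrm{P}_y(K)}$ rather than $\mathrm{P}_y(K)$. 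The clause ``because its prefix equals $w_y$, yields $w_y\in\mathrm{P}_y(K)$'' is precisely the step that does not follow, and the appended remark that $w_y$ is the output of \emph{some} plant-marked trace does not repair it, since that plant-marked trace need not coincide with (or even be supervised like) the one you extended. The one-line algebraic identity $\overline{K}\cap\mathcal{L}_{io,m}(\mathcal{P})=K$ is the content your argument is missing; once you have it, no extension step is required.
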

{ 
\begin{proof}
Since the supervisor is non-blocking, by  Proposition~\ref{prop:non-blocking},  we have $\overline{\mathcal{L}_{io,m}(\mathcal{S} \slash \mathcal{P})}={\mathcal{L}_{io}(\mathcal{S} \slash \mathcal{P})}$, and therefore 
\begin{align*}
    \overline{{K}} \cap  \mathcal{L}_{io,m}( \mathcal{P})&=\overline{\mathcal{L}_{io,m}(\mathcal{S} \slash \mathcal{P})} \cap  \mathcal{L}_{io,m}( \mathcal{P})\\
    &= \mathcal{L}_{io}(\lc) \cap  \mathcal{L}_{io,m}( \mathcal{P})
\end{align*}
which implies $\overline{{K}} \cap  \mathcal{L}_{io,m}( \mathcal{P})=\mathcal{L}_{io,m}( \mathcal{S} \slash \mathcal{P})=K$. Hence, $K$ is $\mathcal{L}_{io,m}(\mathcal{P})$-closed.
To prove the controllability condition of ${K}$, let's consider any $w \in \overline{{K}}$, and $(x,y) \in \Sigma^{io}_u$ such that $w(x,y) \in \overline{{K}}\Sigma^{io}_u \cap  \mathcal{L}_{io}(\mathcal{P})$. By the existence of the non-blocking supervisor, we have  $\overline{K}=\overline{\mathcal{L}_{io,m}(\lc)} =\mathcal{L}_{io}(\lc)$.
Therefore $w \in \mathcal{L}_{io}(\lc)$, and then there should exist $w_e \in \mathcal{L}_{e}(\mathcal{S} \slash \mathcal{P})$ such that $\mathrm{P}_{xy}(w_e)=w$.  Given the history of the extended word $w_e$, and the environment input $x$, 
we have an enabled internal event by the supervisor, i.e., $ \sigma \in \mathcal{S}(\mathrm{P}_{xp}(w_e) \cdot  x)$ such that $y=\gamma_p(q_{p0},\mathrm{P}_p(w_e)\cdot \sigma)$. 
Note that $\mathrm{P}_{xy}(w_e)\cdot x$ captures  history of the executed internal  and environment events in $w_e$ concatenated with $x$.
Thus, $w_e \cdot (x,\sigma,y) \in \mathcal{L}_{e}(\lc)$ which implies that $w \cdot (x,y) \in \mathcal{L}_{io}(\mathcal{S} \slash \mathcal{P})= \overline{K}$. As a result, $K$ is output controllable. 
\end{proof}}

\subsection{Game-based Reactive Supervisor Design }

To gain insights on the sufficient conditions of the existence of the supervisory, we approach through a constructive way. Particularly, inspired by the reactive synthesis work, we present a game-theoretic  method to design such a non-blocking reactive supervisor to achieve the reactive specification. 

Similar to the conventional  reactive synthesis, the designing of a  controller can be seen as a game between the reactive supervisor $\mathcal{S}$, as a player aiming to control the plant  to satisfy the specification $K$, and the environment with intention of driving the plant to violate $K$. 
However, in reactive supervisory control of open DESs, we have a plant that is capable of restricting the supervisor with internal uncontrollable events. 
More specifically, in this setup when the environment selects an input, the supervisor chooses a control pattern that  must include all the uncontrollable events, and then the plant can only execute  an internal event from the assigned control pattern set.  
We therefore propose to design a two-player turned-based  game between  the  environment and the reactive  supervisor.
The arena construction has the following steps:

(a) Given $\mathcal{P}=(Q_p,\Sigma_p,\Sigma_x,\Sigma_y, q_{p0},\delta_p,\gamma_p,Q_{pm})$, and  a non-empty  reactive specification $K \subseteq \mathcal{L}_{io,m}(\mathcal{P})$. Construct a  transducer $\mathcal{T}_K=(Q_k,  \Sigma_x, \Sigma_y , q_{k0},\delta_k, \gamma_k,Q_{km})$, such that $\mathcal{L}_{m}(\mathcal{T}_K)=K$. 

(b) Construct an  automaton   $A_K=(\tilde{Q}_k , \Sigma_x\times  \Sigma_y,  \tilde{q}_{k0}, \tilde{\delta}_k, \tilde{Q}_{km})$, where  $\tilde{Q}_k = {Q_k}\cup \{q_{\bot}\} $, and  $q_{\bot}$ is a dummy state, $\tilde{q}_{k0}=q_{k0}$, $\tilde{Q}_{km}=Q_{km} \subseteq \tilde{Q}_k$, and the transition function for any $q \in \tilde{Q}_k$,  $x\in \Sigma_x$, and $y \in \Sigma_y$ is 
$\tilde{\delta}_{k}(q,(x,y)) =  \delta_{k}(q,x)  $ if $q \in Q_k \wedge \delta_{k}(q,x)!  \wedge y=\gamma_{k}(q,x)$, and otherwise $\tilde{\delta}_{k}(q,(x,y)) =q_{\bot}$.
% \begin{align*}
%   \tilde{\delta}_{k}(q,(x,y)) = \left\{  \begin{aligned}&\delta_{k}(q,x)  &\text{if } q \in Q_k \wedge \delta_{k}(q,x)! \\&   & \wedge y=\gamma_{k}(q,x)\\ &q_{\bot} &\text{Otherwise} \end{aligned} \right.  
% \end{align*} 
Therefore, for any $w \in (\Sigma_x \times \Sigma_y)^*$, we have $w\not \in  \overline{K}$, if and only if $\tilde{\delta}_{k}(\tilde{q}_{k0},w)=q_\bot$.

(c) Construct the  synchronous composition $\mathcal{P} \parallel A_K =(Q_s,\Sigma_p,\Sigma_x,\Sigma_y, q_{s0},\delta_s,\gamma_s,Q_{sm})$, where  $Q_s = Q_p \times \tilde{Q}_k  $,  $q_{s0}=(q_{p0},\tilde{q}_{k0})$. The transition function  $\delta_{s}:Q_s \times \Sigma_x \times \Sigma_p \to Q_s$, for any $q_p \in Q_p$, $\tilde{q}_k \in \tilde{Q}_k$, $(q_p,\tilde{q}_k) \in Q_s$ , $\sigma \in \Sigma_p$, $x\in\Sigma_x$, and $y\in\Sigma_y$ is  defined as: $\delta_{s}((q_p,\tilde{q}_k),x,\sigma) =( \delta_{p}(q_p,x,\sigma), \tilde{\delta}_{k}(\tilde{q}_k,(x,y)))$   if  $\delta_{p}(q_p,x,\sigma)! \wedge \gamma(q_p,\sigma)=y \wedge \tilde{\delta}_{k}(\tilde{q}_k,(x,y))!$, and otherwise it is undefined. The output function  is $\gamma_s((q_p,\tilde{q}_k),\sigma)=\gamma_p(q_p,\sigma)$, and the marking states are $Q_{sm}= Q_{pm} \times \tilde{Q}_{km}$.
% \begin{align*}
%  &\delta_{s}((q_p,\tilde{q}_k),\sigma,x) =\\
%  &\left\{  \begin{aligned}& ( \delta_{p}(q_p,\sigma,x), \tilde{\delta}_{k}(\tilde{q}_k,(x,y))) &  \text{if } \delta_{p}(q_p,\sigma,x)!\\& & \wedge \gamma(q,\sigma)=y\\&\text{Otherwise} & \text{Undefined} \end{aligned}
% \right.
% \end{align*} 
%We therefore have  $\mathrm{P}_{xp}(\mathcal{L}_{e}(\mathcal{P} \parallel A_K ))=\mathrm{P}_{xp}(\mathcal{L}_{e}(\mathcal{P}))$, and $\mathcal{L}_{io,m}(\mathcal{P} \parallel A_K)= K$.

%We therefore have  $\mathcal{L}_{io,m}(\mathcal{P} \parallel A_K)= K$.

Formally, given the control pattern set $\Theta$, and  $\mathcal{P} \parallel A_K$,  the  arena is defined as $\mathcal{G}=\left(Q_g, \Sigma_g,v_{g0},\delta_g,win\right)$, where  $Q_{g}$ is state of the game, which  is partitioned into two disjoint set of $V_e$, and $V_s$ respectively representing  the environment, supervised  plant states. Let's first denote the state set $Q_s$ by $\{q_1,q_2,\dots,q_{|Q_s|} \}$, and then for all $q_i \in Q_s$, $x \in \Sigma_x$, and $\theta \in \Theta$, the game states are defined by:
\begin{itemize}
    \item $V_e:=\{q_i \mid  i\in \{1,\dots,{|Q_s|} \}  $,
    \item $V_s:=\{(q_i,x) \mid i\in \{1,\dots,{|Q_s|} \}\} $.
   % \item $V_p:=\{(q_i,x,\theta) \mid i\in \{1,\dots,{|Q_s|} \}\}$.
\end{itemize}
%Here $v_\bot$ is a dummy state. 
The initial state of the game is $v_{g0}=q_{s0}$, and $q_{g0}\in V_e$. The  event set is $\Sigma_{g} := \Sigma_x \cup (  \Theta \times \Sigma_p)$.
The transition function is defined as $\delta_g:=  f_e \cup f_s $ where $f_e \subseteq V_e \times \Sigma_x \times V_s$,  $f_s \subseteq V_s \times (  \Theta \times \Sigma_p) \times V_e$. 
They are defined as:

\begin{itemize}
    \item $f_e(q,x)=\{(q,x)  \mid q \in Q_s, x\in \Sigma_x, (q,x) \in V_s\}$,
    \item $f_s((q,x),(\theta,\sigma))=\{q' \mid (q,x) \in V_s, \theta \in \Theta,  \sigma \in \theta, \delta_s(q,x,\sigma)=q'\}$.
 %   \item $f_p((q,x,\theta),\sigma)=\{q' \mid q' \in V_e, (q,x,\theta) \in V_p, \sigma \in \theta, \delta_s(q,\sigma,x)=q'\}.$
\end{itemize}
Here, the game state $q \in V_e$ is the environment player state, and she can selects an input $x \in \Sigma_x$.  Given the environment input $x$, the game proceeds to the supervisor player state $(q,x)$. The supervisor then chooses a control pattern $\theta$ from $\Theta$, and the supervised plant selects a permitted internal action, i.e., $\sigma \in \theta$.  At this state the game again proceeds to the  environment state $q'$ if  $q'=\delta_s(q,x,\sigma)!$.
In this setup, the transition function $f_e$ captures the transitions according to all possible subset of environment input, and  $f_s$ represents the transitions according to supervised plant choice.

Let's define a set of state  $V_\bot=\{(q_p,q_\bot) \mid (q_p,q_\bot) \in V_e, q_p \in Q_p, q_\bot \in \tilde{Q}_k \} $, representing the loosing states for the reactive supervisor. 
The winning condition is then defined as a safety condition $win= Q_g - V_{\bot}$. 
The following example illustrates the game construction.

\begin{example}\label{exp:game_const}
Consider the plant model in Fig.~\ref{fig:example1}. Let's assume $\Sigma_{uc}=\{\sigma_{u} \}$, and $\Sigma_{c}=\{\sigma_{c}\}$. 
The reactive specification is considered as 
%$K=\{(x_1,y_1)(x_2,y_1),  (x_1,y_1)(x_1,y_2), (x_2,y_2)(x_2,y_1), (x_2,y_2)(x_1,y_2)\}$.
\begin{align*}
    K=\{&(x_1,y_1)(x_2,y_1),  (x_1,y_1)(x_1,y_2),\\
    &(x_2,y_2)(x_2,y_1), (x_2,y_2)(x_1,y_2)\}.
\end{align*}

The   arena for the plant  and the reactive specification $K$ can be constructed as shown in Fig.~\ref{fig:game_arena_exmp}. The circles  are the environment states, the squares are the supervised plant states. The loosing states is $v_\bot$, and all other states are winning states. 
%Please note that for clarity of the presentation, we have removed  any transitions from $q_1,q_0$ that leads to loosing state.%, and also the runs over $\{x_1x_2\}^*$.
%If the labels is not shown, it follows the plant $\mathcal{P}$ output function $\gamma_p(q,\sigma)$ 
 \hfill\ensuremath{\square}
\end{example}

\tikzstyle{game_env}=[circle,thick,draw=black!75,
  			  fill=black!20,minimum size=3mm,inner sep=.5mm]
\tikzstyle{game_sup}=[rectangle,thick,draw=black!75,
  			  fill=black!20,minimum size=3mm,inner sep=.5mm]
 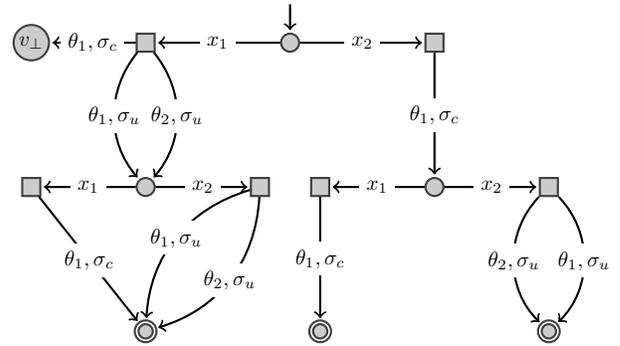
\begin{figure}[!ht]  
 	\centering
 	\begin{tikzpicture}[shorten >=1pt,node distance=.9cm,on grid,auto, bend angle=20, thick,scale=.8, every node/.style={transform shape}] 
%%%%%%%%%%%%%%%%%%%%% Environment Layer 1 Nodes %%%%%%%%%%%%%%%%%%%%%% 	
 	\node[game_env,initial above,initial text=] (ve_q0s0)   {};
%%%%%%%%%%%%%%%%%%%%% Supervisor Layer 1 Nodes %%%%%%%%%%%%%%%%%%%%%% 	 	
    \node[game_sup] (vs_q0s0x1)   [left=of ve_q0s0,yshift=0cm,xshift=-1.5cm] {};
    \node[game_sup] (vs_q0s0x2)   [right=of ve_q0s0,yshift=0cm,xshift=1.5cm] {};
%%%%%%%%%%%%%%%%%%%%% Environment Layer 2  Nodes %%%%%%%%%%%%%%%%%%%%%% 
   	\node[game_env] (ve_q1qN) [left=of vs_q0s0x1,yshift=0cm,xshift=-1.0cm]  {$v_\bot$};%{$q_1,q_\bot$}
   	\node[game_env] (ve_q1s1) [below=of vs_q0s0x1,yshift=-1.5cm,xshift=0cm]  {};%{$q_1,s_1$}
   	\node[game_env] (ve_q2s2) [below=of vs_q0s0x2,yshift=-1.5cm,xshift=0cm]  {};%{$q_2,s_2$}
%%%%%%%%%%%%%%%%%%%%% Supervisor Layer 2 Nodes %%%%%%%%%%%%%%%%%%%%%% 
    \node[game_sup] (vs_q1s1x1)  [left=of ve_q1s1,yshift=0cm,xshift=-1.0cm] {};
    \node[game_sup] (vs_q1s1x2)  [right=of ve_q1s1,yshift=0cm,xshift=1.0cm] {};
    
    \node[game_sup] (vs_q2s2x1)  [left=of ve_q2s2,yshift=0cm,xshift=-1.0cm] {};
    \node[game_sup] (vs_q2s2x2)  [right=of ve_q2s2,yshift=0cm,xshift=1.0cm] {};
%%%%%%%%%%%%%%%%%%%%% Environment Layer 3 Nodes %%%%%%%%%%%%%%%%%%%%%%  
    \node[game_env] (ve_q2s3) [accepting,below=of ve_q1s1,yshift=-1.5cm,xshift=0cm]  {};
    \node[game_env] (ve_q2s6) [accepting,below=of vs_q2s2x1,yshift=-1.5cm,xshift=0cm]  {};
     \node[game_env] (ve_q2s5) [accepting,below=of vs_q2s2x2,yshift=-1.5cm,xshift=0cm]  {};
   
 \path[->]
%%%%%%%%%%%%%%%%%%%%% Environment Layer 1 Edges %%%%%%%%%%%%%%%%%%%%%% 
 (ve_q0s0) edge [left=0] node [midway,fill=white,anchor = center]  {$x_1$} (vs_q0s0x1)
 (ve_q0s0) edge [left=0] node [midway,fill=white,anchor = center]  {$x_2$} (vs_q0s0x2)
 %%%%%%%%%%%%%%%%%%%%% Supervisor Layer 1 Edges %%%%%%%%%%%%%%%%%%%%%% 	
 (vs_q0s0x1) edge [right=-45] node [midway,fill=white,anchor = center]  {$\theta_1,\sigma_c$} (ve_q1qN)
 (vs_q0s0x1) edge [right=0, bend right=40] node [midway,fill=white,anchor = center]  {$\theta_1,\sigma_u$} (ve_q1s1)
 (vs_q0s0x1) edge [right=0, bend left=40] node [midway,fill=white,anchor = center]  {$\theta_2,\sigma_u$} (ve_q1s1)
 (vs_q0s0x2) edge [right=0] node [midway,fill=white,anchor = center]  {$\theta_1,\sigma_c$} (ve_q2s2) 
 %%%%%%%%%%%%%%%%%%%%% Environment Layer 2 Edges %%%%%%%%%%%%%%%%%%%%%%
 (ve_q1s1) edge [left=0] node [midway,fill=white,anchor = center]  {$x_1$} (vs_q1s1x1)
 (ve_q1s1) edge [left=0] node [midway,fill=white,anchor = center]  {$x_2$} (vs_q1s1x2)
 
 (ve_q2s2) edge [left=0] node [midway,fill=white,anchor = center]  {$x_1$} (vs_q2s2x1)
 (ve_q2s2) edge [left=0] node [midway,fill=white,anchor = center]  {$x_2$} (vs_q2s2x2)
 %%%%%%%%%%%%%%%%%%%%% Supervisor Layer 2 Edges %%%%%%%%%%%%%%%%%%%%%%
  (vs_q1s1x1) edge [right=0] node [midway,fill=white,anchor = center]  {$\theta_1,\sigma_c$} (ve_q2s3) 
  (vs_q1s1x2) edge [bend right=35] node [midway,fill=white,anchor = center]  {$\theta_1,\sigma_u$} (ve_q2s3) 
  (vs_q1s1x2) edge [bend left=35] node [midway,fill=white,anchor = center]  {$\theta_2,\sigma_u$} (ve_q2s3) 
  
  (vs_q2s2x1) edge [] node [midway,fill=white,anchor = center]  {$\theta_1,\sigma_c$} (ve_q2s6)
  (vs_q2s2x2) edge [bend left=45] node [midway,fill=white,anchor = center]  {$\theta_1,\sigma_u$} (ve_q2s5)
  (vs_q2s2x2) edge [bend right=45] node [midway,fill=white,anchor = center]  {$\theta_2,\sigma_u$} (ve_q2s5)
 ;
 	\end{tikzpicture}
 \caption{Game arena constructed from the open DES in Fig.~\ref{fig:example1}, the reactive specification in Example~\ref{exp:game_const}, and  the control pattern in Example~\ref{exp:control_pattern}. }
\label{fig:game_arena_exmp}
 \end{figure}
 
%The event $\sigma_\bot$ and transition function $\delta_{\bot}$ captures  the DES enters  deadlocks.

% The reactive supervisor wins the game from the initial state $v_{g0}$ if it can force the game states  to always stay in set $win$, regardless how the environment and the plant choose their output transition \cite{ummels2006rational}.

\subsection{Solving Safety Games}
Each player in $\mathcal{G}$ starts from the initial state, $v_{g0}$, and plays the game by observing the history of the states, and when the current state of the game is his state, he can select an outgoing transition. We call a player wins the game $\mathcal{G}$, if he can force the state of the game to stay in set $win$, regardless how the other player plays the game. Two-player safety game $\mathcal{G}$  is determined with  positional  winning strategies and it 
can be solved with simple fixed point construction, with complexity linear time in the size of $|\delta_g|+|Q_g|$. If a player has a winning strategy, a finite realization of it can be obtained \cite{berwanger2013graph}. 
%In the constructed game a deadlock occurs when the supervised plant enters a state that there is  no enabled event  or when the supervisor selects a control pattern that disables all possible events at the state. Therefore, the reactive supervisor

If the reactive supervisor has wining strategy from the initial state $v_{g0}$, it induces a game arena $\mathcal{G}$, that all the runs starting from $v_{g0}$ stays in set $win$ \cite{berwanger2013graph}.
The set of runs over ${\mathcal{G}}$ produce all the extended input-output behavior of the  plant that are compatible with the reactive specification  $K$. Formally, let's denote it as $\mathcal{C}_k({\mathcal{G}}) \subseteq \mathcal{L}_{e}( \mathcal{P})$, that inductively is defined as:

\begin{itemize}
\item $(\epsilon,\epsilon,\epsilon ) \in \mathcal{C}_k({\mathcal{G}})$, $q=\delta_p(q_{p0},\epsilon,\epsilon)$, and 
\item for any $ w_e \in \mathcal{C}_k({\mathcal{G}})$,  $ x, \in \Sigma_x $, $ y \in \Sigma_y$, $ \sigma \in \Sigma_{p}$,\\
then $w_e \cdot (x,\sigma,y) \in \mathcal{C}_k({\mathcal{G}})$ and $q_e=\delta_p(q_{p0},\mathrm{P}_{xp}(w_e))$  iff   $f_{e}(q_e,x)!$, and 
   there exists $ \theta \in \Theta $, and $\sigma \in \Sigma_p$ such that $f_{s}((q_e,x),(\theta,\sigma))!$, $\sigma \in \theta$, and $y=\gamma_p(q_e,\sigma)$.
\end{itemize}

The notation of realizability of a specification in reactive synthesis formalism is the existence of an reactive system that satisfies the specification for any environment behavior. Here we adapt this definition to the existence of a reactive supervisor $\mathcal{S}$ that characterizes the set $\mathcal{C}_k({\mathcal{G}})$ and can extend the history of the game to the mark states. 
The $\mathcal{S} \mathcal{P}$-realizable notation is given as follows.%in Definition~\ref{def:sp-realizable}.
{ 
\begin{definition} \label{def:sp-realizable}
A reactive specification $K \subseteq \mathcal{L}_{io,m} $  is called realizable by a reactive supervisor  $\mathcal{S}: His_x(\mathcal{P})\to \Theta$,  with respect to $\mathcal{L}_{e}(\mathcal{P})$, in short $\mathcal{S} \mathcal{P}$-realizable, if $\mathcal{C}_k({\mathcal{G}}) \neq \emptyset$, and for any $w_e \in \mathcal{C}_k({\mathcal{G}})$, we have 
$\mathrm{P}_{xy}(w_e) \in \overline{K} $ that can be extended to the mark states, i.e., there exists a $w'_e \in \mathcal{L}_{e}(\mathcal{P})$, such that $w_e \in \overline{w'_e}$ and $\mathrm{P}_{xy}(w'_e ) \in K $.
\end{definition}
}
%If the reactive supervisor player wins the game, it 
%Let's define a induced game arena,  $\mathcal{G}_s=(\tilde{Q}_g,\tilde{\delta_g},\Sigma_g,v_{g0},win)$.
% \begin{lemma}
% If the reactive supervisor can force the safety winning  condition in $\mathcal{G}$,  $\mathcal{C}_k({\mathcal{G}})$ satisfies the $\mathcal{S} \mathcal{P}$-realizability conditions, and  rective specification $K \subseteq \mathcal{L}_{io,m} $ is $\mathcal{S} \mathcal{P}$-realizability.
% \end{lemma}
% \begin{proof}
 
% \end{proof}

%The $\mathcal{S} \mathcal{P}$-realizability condition ensures that the environment can not force the  supervised plant to block the 
We then show in Theorem~\ref{thm:nec_suff_cond} that $\mathcal{S} \mathcal{P}$-realizability, together with the output controllability and language closeness, presents a necessary and sufficient condition for the existence of a non-blocking reactive supervisor.
%The proof of Theorem~\ref{thm:nec_suff_cond} is inspired by   \cite[Theorem~1]{ushio2016nonblocking} on the non-blocking supervisor of mealy automata. 
\begin{theorem} \label{thm:nec_suff_cond}
Let a non-empty reactive specification be $K \subseteq \mathcal{L}_{io,m}(\mathcal{P})$. There exits a  reactive supervisor $\mathcal{S}: His_x(\mathcal{P}) \to \Theta$ satisfies $\mathcal{L}_{io,m}(\mathcal{S} \slash \mathcal{P}) = K$ without blocking if and only if $K$ is output controllable, $\mathcal{L}_{io,m}(\mathcal{P})$-closed, and $\mathcal{C}_k({\mathcal{G}})$ is $\mathcal{S} \mathcal{P}$-realizable. 
\end{theorem}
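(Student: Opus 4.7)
The plan is to prove the two directions separately, treating the ``only if'' part as a short consequence of Lemma~\ref{lem:nec_cond} and a direct unpacking of Definition~\ref{def:sp-realizable}, and the ``if'' part as a constructive argument based on a winning strategy in the safety game $\mathcal{G}$.

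For necessity, suppose a non-blocking $\mathcal{S}$ with $\mathcal{L}_{io,m}(\mathcal{S}\slash\mathcal{P}) = K$ exists. Output controllability and $\mathcal{L}_{io,m}(\mathcal{P})$-closedness follow verbatim from Lemma~\ref{lem:nec_cond}. To obtain $\mathcal{S}\mathcal{P}$-realizability, I would show that every $w_e \in \mathcal{L}_{e}(\mathcal{S}\slash\mathcal{P})$ belongs to $\mathcal{C}_k(\mathcal{G})$: at each prefix, the supervisor's chosen control pattern $\theta \supseteq \Sigma_{u}$ witnesses the existence requirement in the inductive definition of $\mathcal{C}_k(\mathcal{G})$, the environment move $x$ is admitted by $f_e$ because every state of $\mathcal{P}$ is input-enabled, and the plant move $\sigma \in \theta$ with output $y=\gamma_p(q_e,\sigma)$ is admitted by $f_p$ because $w_e(x,\sigma,y)\in\mathcal{L}_e(\mathcal{P})$. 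Non-emptiness of $\mathcal{C}_k(\mathcal{G})$ is immediate from the base case. The extension property in Definition~\ref{def:sp-realizable} then follows from Proposition~\ref{prop:non-blocking}, which guarantees that any prefix of $\mathcal{L}_{io}(\mathcal{S}\slash\mathcal{P})$ extends to a marked word in $K$; lifting this to the extended language via $\mathrm{P}_{xy}$ provides the desired ${w_e}'\in\overline{w_e}$ with ${w_e}'\in K$.

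For sufficiency, I would exploit that $\mathcal{S}\mathcal{P}$-realizability on the safety game $\mathcal{G}$ yields a memoryless winning strategy $f^\star : V_s \to \Theta$ for the supervisor player that keeps the play inside $win = V_g \setminus V_\bot$, and that always admits an extension to a state in $Q_{sm}$. I would then define
\[
\mathcal{S}\bigl(\mathrm{P}_{xp}(w_e)\bigr) \;=\; f^\star\bigl((q_e,x)\bigr),
\]
where $(q_e,x)$ is the unique $V_s$-state reached in $\mathcal{G}$ by replaying $w_e$ under $f^\star$. The main obstacle is to show the two-sided equality $\mathcal{L}_{io,m}(\mathcal{S}\slash\mathcal{P}) = K$ together with non-blocking. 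The inclusion $\mathcal{L}_{io,m}(\mathcal{S}\slash\mathcal{P}) \subseteq K$ comes from the fact that every play consistent with $f^\star$ stays outside $V_\bot$, hence the synchronous composition $\mathcal{P}\parallel A_K$ never reaches $q_\bot$, so the projected input-output word lies in $\overline{K}$; the reverse inclusion combined with marking comes from $\mathcal{L}_{io,m}(\mathcal{P}\parallel A_K) = K$ together with the $\mathcal{S}\mathcal{P}$-realizability extension property. Here is where output controllability becomes critical: for any $w\in\overline{K}$ and any input $x$, the uncontrollable input-output pairs $(x,y)\in\Sigma^{io}_u$ that the plant may be forced to emit must satisfy $w(x,y)\in\overline{K}$, so no $V_\bot$-state is triggered by an event the supervisor could not have disabled, making the winning strategy feasible. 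Finally, $\mathcal{L}_{io,m}(\mathcal{P})$-closedness is used to upgrade from $\overline{K}$ to $K$ when intersecting with the marked language of the plant, giving $\mathcal{L}_{io,m}(\mathcal{S}\slash\mathcal{P}) = K$, and non-blocking in the sense of Definition~\ref{def:non-blocking} follows from the winning condition combined with the extension clause of Definition~\ref{def:sp-realizable}.

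The step I expect to be hardest is the feasibility of the winning strategy under uncontrollable internal events: one must argue that in every $V_p$-state $(q_i,x,\theta)$ reached along a play compatible with $f^\star$, all plant-chosen $\sigma_u\in\Sigma_{u}\subseteq\theta$ lead to successor environment states in $V_g\setminus V_\bot$. This is exactly where output controllability of $K$ is needed in its reactive form, and establishing the precise correspondence between ``uncontrollability in $\mathcal{G}$'' and ``uncontrollability in the input-output sense $\Sigma^{io}_u$'' is the technical core of the argument.
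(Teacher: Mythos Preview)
Your proposal is correct and follows the same overall architecture as the paper: necessity via Lemma~\ref{lem:nec_cond} plus unpacking Definition~\ref{def:sp-realizable}, and sufficiency by showing $\mathcal{L}_{io}(\mathcal{S}\slash\mathcal{P})=\overline{K}$, upgrading to $K$ via $\mathcal{L}_{io,m}(\mathcal{P})$-closedness, and obtaining non-blocking from the extension clause of $\mathcal{S}\mathcal{P}$-realizability.

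The one noteworthy difference is in how the supervisor $\mathcal{S}$ appears in the sufficiency direction. The paper does not explicitly construct $\mathcal{S}$; it treats the $\mathcal{S}$ referenced in the $\mathcal{S}\mathcal{P}$-realizability hypothesis as already given (recall Definition~\ref{def:sp-realizable} is stated ``with respect to $\mathcal{L}_e(\mathcal{P})$ and a reactive supervisor $\mathcal{S}$''), and then verifies by a language-level induction, splitting on whether $(x,y)\in\Sigma^{io}_u$, that this $\mathcal{S}$ achieves $\mathcal{L}_{io}(\mathcal{S}\slash\mathcal{P})=\overline{K}$. You instead make the construction explicit by extracting a memoryless winning strategy $f^\star$ on $\mathcal{G}$ and defining $\mathcal{S}$ from it. This buys you a cleaner causal story (it is transparent where $\mathcal{S}$ comes from and why it avoids $V_\bot$), at the cost of having to justify that $\mathcal{S}\mathcal{P}$-realizability indeed yields such an $f^\star$ and that memorylessness suffices; the paper sidesteps that step by never invoking the strategy explicitly. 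The ``hardest step'' you flag, namely that every uncontrollable $\sigma_u\in\theta$ keeps the play inside $win$, is exactly the place where the paper invokes output controllability in its case analysis, so your identification of the technical core matches the paper's use of the hypothesis.
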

\begin{proof}
 Necessity: since it is assumed there exist a reactive supervisor that satisfies the specification, $\mathcal{L}_{io,m}(\mathcal{S} \slash \mathcal{P}) = K$, according to Lemma~\ref{lem:nec_cond}, $K$ is output controllable and language close. Furthermore it is $\mathcal{S} \mathcal{P}$-realizable by the definition.
 \\
 Sufficiency: 
 Here we need to prove if $K$ is output controllable,  closed with respect to $\mathcal{L}_{io,m}(\mathcal{P})$, and $\mathcal{S} \mathcal{P}$-realizable then there exist a non-blocking reactive supervisor $\mathcal{S}$ such that $\mathcal{L}_{io,m}(\mathcal{S} \slash \mathcal{P}) = K$.
We first prove $\mathcal{L}_{io}(\lc) = \overline{K}$. We know $\mathcal{L}_{io}(\lc) \cap \overline{K} \neq \emptyset$ since $K$ is non-empty,  and by definition of
$(\epsilon,\epsilon,\epsilon) \in \mathcal{L}_{e}(\lc) $. Therefore, we have $(\epsilon,\epsilon) \in \mathcal{L}_{io}(\lc) \cap \overline{K}$. Now, consider $w \in \mathcal{L}_{io}(\lc) \cap \overline{K}$, and $(x,y)\in \Sigma_x\times \Sigma_y$.
Let's first assume $w \cdot (x,y) \in  \mathcal{L}_{io}(\lc)$,  and  $(x,y) \in \Sigma^{io}_u$, then since $K$ is output controllable, $\overline{K}(\Sigma^{io}_u) \cap \mathcal{L}_{io}(\lc) \subseteq \overline{K}$, and therefore, we have $w \cdot (x,y) \in \overline{K}$. 
Now let's consider if $(x,y) \in \{(\Sigma_x \times \Sigma_y) - \Sigma^{io}_u$\},  then there exists $w_e \sigma_e \in \mathcal{L}_{e}(\mathcal{S} \slash \mathcal{P})$ such that $\mathrm{P}_{xy}(w_e)=w$, $\mathrm{P}_{p}(\sigma_e)=\sigma_c$,  which is is permitted by $\mathcal{S}$, i.e., $\sigma_c \in \mathcal{S}(\mathrm{P}_{xp}(w_e) \cdot \mathrm{P}_{x}(\sigma_e))$, and generates output $y=\gamma_p(\delta_p(q_{p0},\mathrm{P}_{xp}(w_e),\sigma_c)$. Since $\sigma_c$ is permitted by $\mathcal{S}$, it should be included in the induced game arena $\mathcal{G}$, and therefore by  $\mathcal{S} \mathcal{P}$-realizability of $K$, we have $w_e \cdot \sigma_e \in  \mathcal{C}_k({\mathcal{G}})$ and $\mathrm{P}_{xy}(w_e \cdot \sigma_e)\in \overline{K}$. Thus  $\mathcal{L}_{io}(\mathcal{S} \slash \mathcal{P}) \subseteq \overline{K}$. 
Now let's consider  $w \cdot (x,y) \in \overline{K}$.
By definition of $A_K$, we have $w \cdot (x,y) \in \overline{\mathcal{L}_m(A_K)}$, and it should belong to the runs over the induced game arena $\mathcal{C}_k({\mathcal{G}})$. 
Since  $\mathcal{C}_k({\mathcal{G}})$ satisfies the $\mathcal{S} \mathcal{P}$-realizability conditions, there exists $w_e  \in \mathcal{C}_k({\mathcal{G}}) \subseteq \mathcal{L}_{e}(\lc)$ such that $\mathrm{P}_{xy}(w_e )=w \cdot (x,y)$, that implies   
$ w \cdot (x,y) \in \mathcal{L}_{io}(\mathcal{S} \slash \mathcal{P})$. Therefore $\mathcal{L}_{io}(\mathcal{S} \slash \mathcal{P}) =  \overline{K}$.
Moreover, since  $K$ is closed with respect to $\mathcal{L}_{io,m}(\mathcal{P})$, we have $\mathcal{L}_{io,m}(\mathcal{S} \slash \mathcal{P})=\mathcal{L}_{io}(\mathcal{S} \slash \mathcal{P}) \cap \mathcal{L}_{io,m}( \mathcal{P})=K$.
 To prove non-blocking property of the reactive supervisor, ${\mathcal{L}_{e}(\lc)} \subseteq \overline{\mathcal{L}_{e,m}(\lc)}$, let's consider any $w_e \in \mathcal{L}_{e}(\mathcal{S} \slash \mathcal{P})$. By $\mathcal{S} \mathcal{P}$-realizability conditions, we have $w_e \in \mathcal{C}_k({\mathcal{G}}) \subseteq \mathcal{L}_{e}(\lc) $, such that $\mathrm{P}_{xy}(w_e) \in \overline{K}=\mathcal{L}_{io}(\mathcal{S} \slash \mathcal{P})  $.  Then there exists a $w'_e \in \mathcal{L}_{e}(\mathcal{S} \slash \mathcal{P})$ such that $w_e \in \overline{w'_e}$, and $\mathrm{P}_{xy}(w'_e) \in {K}$, and therefore, we have $w'_e \in \mathcal{L}_{e,m}(\lc)$. Hence,  ${\mathcal{L}_{e}(\lc)} \subseteq \overline{\mathcal{L}_{e,m}(\lc)}$.
\end{proof}

{ 
\begin{example} \label{exp:sp_realizability}
Consider the game in Fig.~\ref{fig:game_arena_exmp}, and the reactive specification and control patterns  given in Example~\ref{exp:game_const}. The reactive supervisor can win the game and force the plant to stay at set $win$ by choosing the following control pattern. 
%a prefix example in set $\mathcal{C}_k({\mathcal{G}})$ is $\{(\epsilon,\epsilon,\epsilon), (x_1,\sigma_1,y_1) (x_1,\sigma_u,y_2) \} \in \mathcal{C}_k({\mathcal{G}})$.
\begin{align*}
\mathcal{S}(h)=\left\{
                \begin{array}{ll}
                  \theta_2 &  h= x_1 \vee x_1 \sigma_u x_2 \vee x_2 \sigma_c x_2\\
                   \theta_1 & h=  x_2 \vee x_1 \sigma_u x_1 \vee x_2 \sigma_c x_1
                \end{array}
             \right.
\end{align*}
This reactive supervisor can win the game, and hence the graph induced by this supervisor satisfies the $\mathcal{S} \mathcal{P}$-realizability conditions. As it is clear from Fig.~\ref{fig:game_arena_exmp},  all the runs starting from initial state $v_{g0}$, can reach the mark state $q_2$ and never visit $V_\bot$ states.
 \hfill\ensuremath{\square}
\end{example}}

\section{Conclusion}
In this paper, we studied a reactive synthesis problem for open DESs with finite regular language specifications. The proposed DES is a reactive system with a deterministic internal behavior  and an input function.
The control objective for the supervisor is to  achieve the specification regardless of how the  environment behaves.
We provided a necessary and sufficient conditions for  existence of the reactive supervisor.
%If the condition holds, a three-player game arena can be constructed to drive a finite state representative the supervisor. 
%ali talk about the connection sup control and rec synthesis
In our future work, we plan to study  reactive supervisory control with other specifications such as $\omega-$regular  language and temporal logic formulas. 
%Moreover, we will extend this study to open DESs with non-deterministic transition function and satisfaction with respect to bi-simulation.

\bibliographystyle{IEEEtran}        % Include this if you use bibtex
\bibliography{Reference_RSC}
\end{document}